\newcommand\BibTeX{{\rmfamily B\kern-.05em \textsc{i\kern-.025em b}\kern-.08em
T\kern-.1667em\lower.7ex\hbox{E}\kern-.125emX}}
\newtheorem{definition}{Definition}
\newtheorem{assumption}{Assumption}
\newtheorem{lemma}{Lemma}
\newtheorem{corollary}{Corollary}
\newtheorem{example}{Example}
\newtheorem{proposition}{Proposition}
\newcommand{\be}{\begin{equation}}
\newcommand{\ee}{\end{equation}}
\newcommand{\bd}{\begin{displaymath}}
\newcommand{\ed}{\end{displaymath}}
\newcommand{\ba}{\begin{array}}
\newcommand{\ea}{\end{array}}
\newcommand{\bea}{\begin{eqnarray}}
\newcommand{\eea}{\end{eqnarray}}
\newcommand{\R}{\mathbb{R}}
\newcommand{\Q}{\mathbb{Q}}
\newcommand{\Hi} {{\cal H}^\infty}
\newcommand{\C}{\mathbb{C}}
\newcommand{\Cp}{\mathbb{C}_{+}}
\newcommand{\Cn}{\mathbb{C}_{-}}
\newcommand{\Zp}{\mathbb{Z}_{+}}
\newcommand{\N}[2]{\mathbb{N}_{#1}^{#2}}
\newcommand{\w}{\omega}
\begin{document}

\runningheads{S.~Gumussoy}{Factorization of SISO TDS and FIR Structure of Their $\Hi$ Controllers}

\title{Coprime Inner/Outer Factorization of SISO Time-Delay Systems and FIR Structure of Their Optimal H-Infinity Controllers}

\author{S.~Gumussoy\corrauth}

\address{Dept. of Computer Science, K.U. Leuven, Celestijnenlaan 200A, 3001 Heverlee, Belgium.}

\corraddr{Dept. of Computer Science, K.U. Leuven, Celestijnenlaan 200A, 3001 Heverlee, Belgium. Email: suat.gumussoy@cs.kuleuven.be}

\cgs{This work is supported by the Belgian Programme on Interuniversity Poles of Attraction, initiated by the Belgian State, Prime Ministeries Office for Science, Technology and Culture, the Optimization in Engineering Centre OPTEC of the K.U.Leuven, and the project STRT1-09/33 of the K.U.Leuven Research Foundation.}

\begin{abstract}
The approach in Foias \emph{et al.} (1996) is one of the well-developed methods to design \mbox{H-infinity} controllers for general infinite dimensional systems. This approach is applicable if the plant admits a special coprime inner/outer factorization. We give the largest class of single-input-single-output (SISO) time delay systems for which this factorization is possible and factorize the admissible plants. Based on this factorization, we compute the optimal \mbox{H-infinity} performance and eliminate unstable pole-zero cancellations in the optimal \mbox{H-infinity} controller. We extend the results on the finite impulse response (FIR) structure of optimal \mbox{H-infinity} controllers by showing that this structure appears not only for plants with input/output (I/O) delays, but also for general SISO time-delay plants.
\end{abstract}

\keywords{Optimal \mbox{H-infinity} controllers; coprime inner-outer factorization; weighted mixed sensitivity problem; finite impulse response; FIR}

\maketitle

\vspace{-6pt}

\section{Introduction}
\vspace{-2pt}
In control applications, robust controllers are desired to achieve stability and performance requirements under model uncertainties and exogenous disturbances \cite{ZhouBook}. The design requirements are usually defined in terms of $\Hi$ norms of the closed-loop functions including the plant, the controller and weights for uncertainties and disturbances. The \emph{optimal $\Hi$ controller design} finds the \emph{optimal $\Hi$ controller} stabilizing the plant and achieving the minimum $\Hi$ norm for the closed-loop system.
\medskip

The optimal $\Hi$ controller is designed for linear rational multi-input-multi-output (MIMO) systems based on Riccati and linear matrix inequality (LMI) methods by \cite{DGKF} and \cite{GAIJRNC94}. The optimal $\Hi$ controller is designed for different types of linear time-invariant time-delay systems.
\begin{enumerate}
	\item \emph{The rational SISO system with an I/O time-delay}: The optimal $\Hi$ controller is designed in \cite{ZKSCL87} \cite{FTZTAC86} using operator theoretic methods; see also \cite{SSCL89} \cite{OSCL90} and their references. State-space solution to the same problem is given in \cite{TSICON97} and \cite{MZTAC00}. Notably \cite{MZTAC00} showed the FIR structure of the optimal $\Hi$ controller.
	 \item  \emph{The rational MIMO system with I/O time-delays}: This plant is the generalization of the previous plant to the MIMO case. The system with the single delay in the feedback loop is considered in \cite{MeinsmaTAC2002} based on the $J$-spectral factorization approach and the FIR structure of dead-time compensators is shown. \cite{MMTAC04} extended this approach and illustrated the FIR structure for multiple I/O time-delays.
	 \item \emph{The infinite dimensional SISO system admitting a coprime-inner/outer factorization}:
The optimal $\Hi$ control problem is solved in \cite{FOT,TOTAC95} based on the parameterization of the solution of an interpolation problem due to Adamjan, Arov and Krein (AAK) \cite{AAK}. This technique is applied to \emph{stable} pseudorational plants in \cite{KYSCL05} and to the plants in a cascade connection of a rational MIMO plant and a scalar infinite dimensional inner function \cite{KPhD05, KashimaCDC2006, KashimaTAC2007, KashimaTAC2008}. Note that these techniques assume that the plant has a special factorization such as a coprime-inner/outer factorization or a series connection of a rational plant and an inner function.
\end{enumerate}

 Using the method in \cite{FOT, TOTAC95} based on AAK theory, we can design optimal $\Hi$ controllers for a large class of SISO infinite dimensional systems including SISO time-delay systems. The only requirement is that the plant should have a \emph{coprime-inner/outer factorization} and the factorization terms should be obtained. In this paper, we give conditions for \emph{general} SISO time-delay systems to check whether the coprime-inner/outer factorization in \cite{FOT,TOTAC95} is feasible and we factorize the admissible plants.

It is well known that optimal $\Hi$ controllers have an feedback connection of a rational transfer function and a FIR block for rational MIMO systems with I/O time-delays \cite{MZTAC00, MeinsmaTAC2002, MirkinTAC2003, MMTAC04}. The \emph{h-truncation} operator is introduced which truncates the impulse response to its restriction on a finite support and using this operator, the FIR structure of optimal $\Hi$ controllers is shown. In AAK theory based methods \cite{KashimaTAC2007, KashimaTAC2008}, the \emph{h-truncation} operator is extended to the \emph{generalized m-truncation} operator which generates an FIR system when the inner function is a \emph{single delay} and the outcome is not an FIR system for general inner functions. In this paper, we show that the FIR structure of optimal $\Hi$ controllers appears for not only systems with I/O time-delays, but also for more general SISO time-delay systems. We also prove that \emph{the FIR structure is due to the unstable pole-zero cancellations in the controller} and any controller with time-delay operators and polynomials having such cancellations has an FIR structure.

The structure of the article is as follows. In Section~\ref{section:prework} we introduce the necessary definitions and the assumptions. Main results are given in the next two sections. Section~\ref{sec:class} describes the coprime-inner/outer factorization of SISO time-delay systems. Section~\ref{sec:implement} computes the optimal $\Hi$ performance and gives the FIR structure of optimal $\Hi$ controllers. Section~\ref{sec:examples} is devoted to the numerical examples. In Section~\ref{sec:conc} some concluding remarks are presented.
\subsection*{Notations} The notations are
as follows:
\begin{tabbing}
  \= $j$\hspace{2.8cm} \=:  the imaginary identity, \\
  \> $\C,\ \R,\ \Q,\ \Zp$ \>: sets of complex, real, rational and positive integer numbers, \\
  \> $\Cp,\ \Cn$ \>: closed right and open left half complex planes, \\
  \> $\mathcal{H}^\infty(\Cp^o),\ \mathcal{H}^2(\Cp^o)$ \>: Hardy spaces of essentially bounded functions and square integrable functions \\
  \>  \> \ \ on the imaginary axis with bounded analytical extensions on open right half \\
  \>  \> \ \  complex plane,\\
  \> $\mathbb{N}, \N{n_1}{n_2}$ \>: set of natural numbers and set of natural numbers from $n_1$ to $n_2$, \\
  \> $\textrm{deg}\ p$ \> : degree of the polynomial $p$, \\
  \> $\sigma_{\min}(A)$ \> : the minimum singular value of a matrix A,\\
  \> $\|F\|_\infty$ \> : the supremum of the maximum singular value of the stable transfer function  \\
   \>  \> \ \ $F(j\w)$ $\forall\w\in\R$.
\end{tabbing}

\vspace{-6pt}

\section{Definitions and assumptions} \label{section:prework}
\vspace{-2pt}

We define the inner and outer functions which are frequently used in the paper.
\begin{definition} \label{def:inner_outer}
A function $m(s)$ is called \emph{inner} if $|m(s)|\leq 1$ for all $s\in\Cp$ and $|m(j\w)|=1$ almost everywhere $\w\in\R$. A function $N(s)\in\mathcal{H}^\infty(\Cp^o)$ is \emph{outer} if and only if the subspace $N \mathcal{H}^2(\Cp^o)$ is dense in $\mathcal{H}^2(\Cp^o)$.
 \end{definition}

Outer functions are generalization of minimum phase functions and they have no poles and zeros inside the open right half complex plane. For further information, see \cite{FOT}.

Given a SISO plant $P(s)$, the optimal $\Hi$ control problem is defined as
\be \label{eq:wsm1}
\gamma_{opt}=\inf_{C\ stab.\ P}\left\| \left[\begin{array}{c}
  W_1(s)(1+P(s)C(s))^{-1} \\
  W_2(s)P(s)C(s)(1+P(s)C(s))^{-1}
\end{array} \right]\right\|_\infty
\ee
where $W_1(s)$ and $W_2(s)$ are weights and rational transfer functions. The optimal $\Hi$ controller \emph{robustly stabilizes} the closed-loop system under model uncertainties represented by a frequency dependent bound $W_2(s)$ and achieves \emph{robust performance} such as good tracking over a low frequency range defined by the weight function $W_1(s)$. \cite{FOT,TOTAC95} computed the optimal $\Hi$ performance $\gamma_{opt}$ and the optimal $\Hi$ controller $C_{opt}$ of this problem for SISO plants admitting a coprime-inner/outer factorization
\be \label{Pfact}
P(s)=\frac{m_n(s)N_o(s)}{m_d(s)}
\ee
where $m_n(s)$ is inner, infinite dimensional and $m_d(s)$ is a rational inner transfer function and $N_o(s)$ is outer, possibly infinite dimensional. Note that the plant $P(s)$ is a general SISO plant. We will give conditions for SISO time-delay systems for which this factorization (\ref{Pfact}) is possible and obtain factorization terms in the next section.

\medskip

The transfer function of SISO time-delay systems is defined as a ratio of \emph{quasi-polynomials}. A \emph{quasi-polynomial} is a generalization of a polynomial and frequently used in transfer function representation of time-delay systems \cite{WimBook}.
\begin{definition} \label{def:qp} Let $q_i(s)$ for $i\in\N{1}{v}$ be polynomials with real coefficients and $h_i$ are non-negative real numbers in an ascending order. A function of the form
\be \label{qp}
q(s)=\sum_{i=1}^v q_i(s)e^{-h_i s}
\ee
is a \emph{quasi-polynomial} where $\textrm{deg}\ q_1 \geq \textrm{deg}\ q_i$ for $i\in\N{2}{v}$. Moreover, a quasi-polynomial is a \emph{neutral} type if there exists at least one $i\in\N{2}{v}$ such that $\textrm{deg}\ q_1 = \textrm{deg}\ q_i$, otherwise, it is a \emph{retarded} quasi-polynomial.
 \end{definition}
 The retarded and neutral quasi-polynomials have infinitely many roots in $\C$ \cite{HV93}. The asymptotic root chains of retarded quasi-polynomials extend to the infinity in real and imaginary parts only inside $\Cn$ whereas the asymptotic root chains of neutral quasi-polynomials contain at least one asymptotic chain of roots extending to the infinity parallel to the imaginary axis. We define the roots of a quasi-polynomial $q(s)$ inside $\Cp$ and $\Cn$ as \emph{unstable} and \emph{stable} roots of $q(s)$.
\medskip

We represent general SISO time-delay systems by the transfer function,
\be \label{P}
P(s)=\frac{q_n(s)}{q_d(s)}=\frac{\sum_{i=1}^{v_n}q_{n,i}(s)e^{-h_{n,i}
s}}{\sum_{k=1}^{v_d}q_{d,k}(s)e^{-h_{d,k} s}}
\ee
where $q_n(s)$ and $q_d(s)$ are quasi-polynomials given in Definition~\ref{def:qp}. Any realizable proper and causal transfer function $P(s)$ also satisfies the necessary conditions, $\textrm{deg}\ q_{n,1} \leq \textrm{deg}\ q_{d,1}$ and $h_{n,1} \geq h_{d,1}$.
\medskip

We assume that quasi-polynomials of the plant $P(s)$ (\ref{P}), $q_n(s)$ and $q_d(s)$, satisfy the following assumptions.
\begin{assumption} \label{assumption_rationaldelay}
Time-delays of $q_n(s)$ and $q_d(s)$ are rational numbers, i.e., , $h_{n,i}, h_{d,k}\in\Q$ for $i\in\N{1}{v_n}$ and $k\in\N{1}{v_d}$.
\end{assumption}
\begin{assumption} \label{assumption_asymptoticchain}
The asymptotic root chains of $q_n(s)$ and $q_d(s)$ do not approach the imaginary axis.
\end{assumption}

By the first assumption, we consider SISO time-delay systems with commensurate delays. This assumption is necessary to characterize the behavior of asymptotic root chains of quasi-polynomials using numerical polynomial root finding algorithms and it is not restrictive in practical control applications. The second assumption excludes a particular case, time-delay systems with poles or zeros asymptotically approaching the imaginary axis. The stabilization of these systems is an on-going research area and it is properties are not fully understood \cite{RSRJDE08,BFPCDC09}. As we shall see in the factorization of time-delay systems, this case also requires the computation of infinitely many imaginary axis roots which is numerically not possible.
\medskip

We give definitions of an \emph{asymptotic polynomial} and a \emph{conjugate quasi-polynomial} needed in the next section.
\begin{definition} \label{def:p}
Let $q(s)$ be a quasi-polynomial in Definition \ref{def:qp}. A function of the form \mbox{$p(s)=\sum_{i=1}^v p_i s^{n_i-n_1}$} is an \emph{asymptotic polynomial} of $q(s)$ where $ p_i=\lim_{s\rightarrow\infty}\frac{q_i(s)}{q_1(s)}$, $h_i=\frac{n_i}{N}$ and $N, n_i\in\Zp$, $p_i\in\R$ for~$i\in\N{1}{v}$.
\end{definition}

\begin{definition} \label{def:qp_conj}
Let $q(s)$ be a quasi-polynomial in Definition \ref{def:qp}. A function of the form
\mbox{$\bar{q}(s)=-q(-s)e^{-h_v s}$} is \emph{conjugate quasi-polynomial} of $q(s)$.
\end{definition}

In order to compute the optimal $\Hi$ performance and obtain the optimal $\Hi$ controller using the method in \cite{FOT,TOTAC95} we need to factorize SISO time-delay systems (\ref{P}) as in (\ref{Pfact}). As we shall see, this is not possible for all SISO time-delay systems. We classify SISO time-delay systems (\ref{P}) admitting the factorization (\ref{Pfact}) and factorize the admissible plants as in (\ref{Pfact}) in the next section.

\section{Coprime Inner/Outer Factorization of SISO Time-Delay Systems} \label{sec:class}

The main difficulty in the approach of \cite{FOT,TOTAC95} is to factorize a SISO time-delay system (\ref{P}) in the coprime-inner/outer form (\ref{Pfact}). For simple cases, this factorization is easy to do, i.e.,

\bd
P(s)=\left(\frac{s-1}{s-2}\right)e^{-s} \quad \Rightarrow \quad m_n(s)=\left(\frac{s-1}{s+1}\right)e^{-s},\;\; m_d(s)=\frac{s-2}{s+2},\;\; N_o(s)=\frac{s+1}{s+2}.
\ed However, \emph{in the general case, it might be difficult to find the inner-outer factorization} (p.23 \cite{FOT}). Since the inner function $m_d(s)$ in (\ref{Pfact}) is rational, the plant $P(s)$ (\ref{P}) (equivalently, the quasi-polynomial $q_d(s)$) has to have finitely many poles (roots) inside $\Cp$. By the following Lemma, We determine whether a given quasi-polynomial has finitely many roots inside $\Cp$.
\begin{lemma} \label{lemma:finitezeros}		
Let $q(s)$ be a quasi-polynomial given in Definition \ref{def:qp} and satisfying Assumption \ref{assumption_rationaldelay}, \ref{assumption_asymptoticchain}. Then the quasi-polynomial $q(s)$ has finitely many roots inside $\Cp$ if and only if $q(s)$ is a retarded quasi-polynomial or $q(s)$ is a neutral quasi-polynomial and the magnitude of roots of its asymptotic polynomial is greater than~$1$.
\end{lemma}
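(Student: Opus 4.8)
The plan is to analyze the asymptotic distribution of the roots of $q(s)$ and to reduce the question of finitely many $\Cp$-roots to the location of the asymptotic root chains, which under the commensurate-delay hypothesis (Assumption~\ref{assumption_rationaldelay}) are governed by the asymptotic polynomial of Definition~\ref{def:p}. Since $q(s)$ is an entire function of finite exponential type, its roots are isolated and only finitely many lie in any bounded region; hence $q(s)$ can possess infinitely many roots in $\Cp$ only if an asymptotic root chain escapes to infinity while staying in $\Cp$, i.e. with $\mathrm{Re}\,s$ bounded below by $0$. The whole argument therefore hinges on pinning down the vertical asymptotes of the root chains.

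First I would dispose of the retarded case. Here $\textrm{deg}\ q_1 > \textrm{deg}\ q_i$ for every $i\in\N{2}{v}$, so the leading term strictly dominates and, as recalled after Definition~\ref{def:qp} and established in \cite{HV93}, every asymptotic root chain has real part tending to $-\infty$. Thus only finitely many roots lie in $\Cp$. This settles the ``$\Leftarrow$'' implication for retarded $q(s)$, and since a retarded quasi-polynomial always enjoys the finiteness property, the retarded alternative needs no converse discussion.

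Next, for the neutral case I would carry out the standard principal-term asymptotics. Writing $h_i=n_i/N$ with $N,n_i\in\Zp$ and dividing $q(s)$ by $q_1(s)e^{-h_1 s}$ gives $\sum_{i=1}^v (q_i(s)/q_1(s))\,e^{-(h_i-h_1)s}$. Letting $|s|\to\infty$ with $\mathrm{Re}\,s$ bounded replaces each ratio $q_i(s)/q_1(s)$ by its limit $p_i$ and each exponential by $(e^{-s/N})^{\,n_i-n_1}$, so that $q(s)=0$ becomes asymptotically equivalent to $p(e^{-s/N})=0$, where $p(z)=\sum_{i=1}^v p_i z^{\,n_i-n_1}$ is exactly the asymptotic polynomial of Definition~\ref{def:p}. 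Because $q$ is neutral, at least two coefficients $p_i$ are nonzero, so $p$ has degree $\geq 1$, with roots $z_1,\dots,z_r$. Each root $z_\ell$ produces an asymptotic chain of roots of $q$ accumulating onto the vertical line $\mathrm{Re}\,s=-N\ln|z_\ell|$ (from $e^{-s/N}\to z_\ell$), running off to $\pm i\infty$ along it. I would invoke \cite{HV93} for the rigorous chain-location statement rather than re-derive the error estimates.

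Finally I would read off the equivalence. The chain attached to $z_\ell$ stays in $\Cp$ for arbitrarily large $|\mathrm{Im}\,s|$ precisely when $-N\ln|z_\ell|\geq 0$, i.e. $|z_\ell|\leq 1$. Assumption~\ref{assumption_asymptoticchain} forbids chains approaching the imaginary axis, so $|z_\ell|=1$ cannot occur; hence each $|z_\ell|$ is either $>1$ (asymptote strictly inside $\Cn$) or $<1$ (asymptote strictly inside the open right half plane). If every $|z_\ell|>1$ then all chains eventually leave $\Cp$, and together with the finitely many bounded roots $q$ has finitely many roots in $\Cp$; conversely, if some $|z_\ell|<1$ the corresponding chain supplies infinitely many $\Cp$-roots. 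This establishes, for neutral $q$, that finiteness of the $\Cp$-roots is equivalent to all roots of the asymptotic polynomial having magnitude greater than $1$, completing both directions. The main obstacle is the asymptotic chain-location step, namely making precise that the roots of $q$ genuinely cluster at $\mathrm{Re}\,s=-N\ln|z_\ell|$ with controlled error; this is precisely where Assumption~\ref{assumption_asymptoticchain} is indispensable, since it excludes the borderline chains $|z_\ell|=1$ that could otherwise drift across the imaginary axis and make the count ambiguous.
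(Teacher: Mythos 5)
Your proof is correct and follows essentially the same route as the paper: dispose of the retarded case via the known finiteness of its $\Cp$-roots, then for the neutral case relate the vertical asymptotes $\mathrm{Re}\,s=-N\ln|z_\ell|$ of the root chains to the roots $z_\ell$ of the asymptotic polynomial $p(e^{-s/N})$ and read off the magnitude-greater-than-one criterion. Your treatment is if anything slightly more explicit than the paper's about why Assumption~\ref{assumption_asymptoticchain} is needed to exclude the borderline $|z_\ell|=1$ chains.
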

\begin{proof}
By Definition~\ref{def:qp}, the quasi-polynomial $q(s)$ is either retarded or neutral type. A retarded quasi-polynomial has finitely many roots inside $\Cp$ \cite{HV93}. The neutral quasi-polynomial has finitely many roots inside $\Cp$ if and only if all its asymptotic root chains extending to the infinity lie inside $\Cn$. The asymptotic root chains of a neutral quasi-polynomial either lie inside $\Cn$ or extend to the infinity in the imaginary part and to a constant value in the real part $\sigma_o$ \cite{WimBook}. Therefore, it is enough to show that the real part of any asymptotic root chains extending to the infinity parallel to the imaginary axis is negative, i.e., $\sigma_o<0$. For large values of the imaginary part $\w$, the behavior of these root chains are characterized by the asymptotic polynomial $p(s)$, i.e., for any fixed $\sigma_o\in\R$ and $\epsilon>0$, there exists $\w_*>0$ such that
\be \label{lemasympol}
\left|\frac{q(s)}{q_1(s)e^{-h_1 s}}
-p\left(e^{-\frac{s}{N}}\right)\right|_{s=\sigma_0+j\w}=
\left|\sum_{i=1}^v\left(\frac{q_i(s)}{q_1(s)}-p_i\right)e^{-(h_i-h_1)s}\right|_{s=\sigma_0+j\w}
<\epsilon\quad \textrm{for}\quad \w>w_*.
\ee This establishes the connection between an asymptotic chain root of $q(s)$, $r_{q,i}$ and the corresponding root of its asymptotic polynomial $p(s)$, $r_{p,i}$ as
\bd
r_{q,i}=\sigma_0+j\w_i \Longleftrightarrow r_{p,i}=e^{-(\sigma_0+j\w_i)/N}.
\ed The real part of $r_{q,i}$ is negative, $\sigma_o<0$, if and only if the magnitude of the root of the asymptotic polynomial $p(s)$ is greater than $1$, $|r_{p,i}|>1$. The assertion follows.
\end{proof}

Given a neutral quasi-polynomial with infinitely many roots inside $\Cp$, its conjugate quasi-polynomial may have finitely many roots inside $\Cp$. This class of neutral quasi-polynomials plays an important role in the plant factorization (\ref{Pfact}). The following Corollary allows us to classify such neutral quasi-polynomials.
\begin{corollary} \label{cor:conjfinzeros}		
Let $q(s)$ be a quasi-polynomial given in Definition \ref{def:qp} and satisfying Assumption \ref{assumption_rationaldelay}, \ref{assumption_asymptoticchain}. The conjugate quasi-polynomial $\bar{q}(s)$ has finitely many roots inside $\Cp$ if and only if the magnitude of roots of the asymptotic polynomial $p(s)$ of $q(s)$ is smaller than~$1$.
\end{corollary}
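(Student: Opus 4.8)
The plan is to reduce the claim about $\bar{q}(s)$ to the root structure of $q(s)$ itself and then reuse the asymptotic-chain analysis already carried out in the proof of Lemma~\ref{lemma:finitezeros}. The starting point is the defining relation $\bar{q}(s)=-q(-s)e^{-h_v s}$ from Definition~\ref{def:qp_conj}: since $e^{-h_v s}$ never vanishes, $s_0$ is a root of $\bar{q}(s)$ precisely when $-s_0$ is a root of $q(s)$. Hence the root set of $\bar{q}$ is the reflection through the origin of the root set of $q$, and $\bar{q}$ has finitely many roots in $\Cp$ if and only if $q$ has finitely many roots in the closed left half-plane $\{\operatorname{Re} s\le 0\}$. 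Working with this correspondence directly is deliberate: it lets me avoid rewriting $\bar{q}$ in the normalized form of Definition~\ref{def:qp} (its smallest-delay coefficient $-q_v(-s)$ need not carry the maximal degree), so I never have to check that $\bar{q}$ satisfies the degree ordering and may instead invoke only the already-established chain description of $q$.

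Next I would recall from the proof of Lemma~\ref{lemma:finitezeros} that, for the neutral quasi-polynomial under consideration, every asymptotic root chain of $q$ extends to infinity parallel to the imaginary axis with real part tending to a constant $\sigma_o$, and that each chain is matched to a root $r_{p,i}$ of the asymptotic polynomial $p(s)$ of Definition~\ref{def:p} through $r_{p,i}=e^{-(\sigma_o+j\w_i)/N}$, so that $|r_{p,i}|=e^{-\sigma_o/N}$. Reflecting through the origin carries a chain of $q$ with limiting real part $\sigma_o$ to a chain of $\bar{q}$ with limiting real part $-\sigma_o$. Therefore the chains of $\bar{q}$ lie asymptotically inside $\Cn$ exactly when $-\sigma_o<0$, i.e. $\sigma_o>0$, which by $|r_{p,i}|=e^{-\sigma_o/N}$ is equivalent to $|r_{p,i}|<1$.

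To assemble the equivalence I would argue that only the asymptotic chains can contribute infinitely many roots, while the remaining low-frequency roots are finite in number and do not affect the count in any half-plane. Consequently $\bar{q}$ has finitely many roots in $\Cp$ if and only if every asymptotic chain of $\bar{q}$ eventually lies in $\Cn$, i.e. $-\sigma_o<0$ for each chain, i.e. $|r_{p,i}|<1$ for every root of $p$. Assumption~\ref{assumption_asymptoticchain} is exactly what rules out a chain with $\sigma_o=0$ (equivalently $|r_{p,i}|=1$), so the split between ``inside $\Cp$'' and ``inside $\Cn$'' is exhaustive and the strict inequality is the correct one.

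The step I expect to require the most care is precisely the legitimacy of this reflection passage together with the bookkeeping of which roots accumulate where; once that is secured, the rest mirrors the Lemma line by line. A secondary check concerns the retarded case, which the surrounding discussion places outside the corollary's scope: there $p$ reduces to the constant $1$ with no roots and the chains of $q$ recede to $\operatorname{Re} s\to-\infty$, so $\bar{q}$ would have infinitely many roots in $\Cp$. Recording this confirms that the asymptotic-polynomial criterion is genuinely a statement about the neutral quasi-polynomials singled out before the corollary, and it reassures that the equivalence is not being misapplied in the degenerate boundary case.
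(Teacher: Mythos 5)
Your argument is correct and follows essentially the same route as the paper: both reduce the claim to the root set of $q(-s)$ via the nonvanishing factor $e^{-h_v s}$ and then reuse the chain-to-asymptotic-polynomial-root correspondence from the proof of Lemma~\ref{lemma:finitezeros}, observing that the reflection $s\mapsto -s$ turns the condition $|r_{p,i}|>1$ into $|r_{p,i}|<1$. Your extra bookkeeping (avoiding renormalization of $\bar{q}$, noting that only the asymptotic chains can contribute infinitely many roots, and that Assumption~\ref{assumption_asymptoticchain} excludes the boundary case $|r_{p,i}|=1$) is sound but does not change the substance of the paper's proof.
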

\begin{proof} The roots of quasi-polynomials $\bar{q}(s)$ and $q(-s)$ in the complex plane are same. By following the steps of the proof in Lemma~\ref{lemma:finitezeros} for $q(-s)$, we obtain
\bd
r_{\bar{q},i}=\sigma_0+j\w_i \Longleftrightarrow r_{p,i}=e^{(\sigma_0+j\w_i)/N}
\ed which is the reciprocal of the case in Lemma~\ref{lemma:finitezeros}. The assertion follows.
\end{proof}

It is not easy to see that whether a quasi-polynomial or its conjugate has finitely many roots inside $\Cp$ without computing some of right-most roots of the quasi-polynomial. Lemma~\ref{lemma:finitezeros} and Corollary~\ref{cor:conjfinzeros} answer this question by only computing the magnitude of the roots of its asymptotic polynomial.
\begin{example} \label{ex_qroots}
Consider the following quasi-polynomials,
 \[
q_1(s)=3s+0.5+(2s+7)e^{-1.5s}+(s-1)e^{-2s}\quad\textrm{and}\quad q_2(s)=s+3+(2s-2)e^{-0.4s}.
\]
The quasi-polynomial $q_1(s)$ is neutral and its corresponding asymptotic polynomial is $p_1(s)=1/3s^4+2/3s^3+1$.  The magnitudes of their roots are $\{1.6796, 1.0312\}$ and greater than $1$. By Lemma~\ref{lemma:finitezeros}, $q_1(s)$ has finitely many roots inside $\Cp$.  Figure~\ref{fig:q1} shows that there are four roots of $q_1(s)$ inside $\Cp$.

The asymptotic polynomial of the neutral quasi-polynomial $q_2(s)$ is $p_2(s)=2s+1$. The magnitude of its root is smaller than $1$ and by Lemma~\ref{lemma:finitezeros}, $q_2(s)$ has infinitely many roots inside $\Cp$. This is illustrated in Figure~\ref{fig:q2} by computed roots of $q_2(s)$ shown with star markers. On the other hand, since the root is smaller than $1$, the conjugate quasi-polynomial $\bar{q}_2(s)$ has finitely many roots inside $\Cp$ by Corollary~\ref{cor:conjfinzeros}. Figure~\ref{fig:q2} shows the roots of $\bar{q}_2(s)$ with circle markers and
there is one root inside $\Cp$.

\begin{figure}[!h]
    \begin{minipage}[t]{0.48\textwidth}
        \vspace{0pt}
        \includegraphics[width=\linewidth]{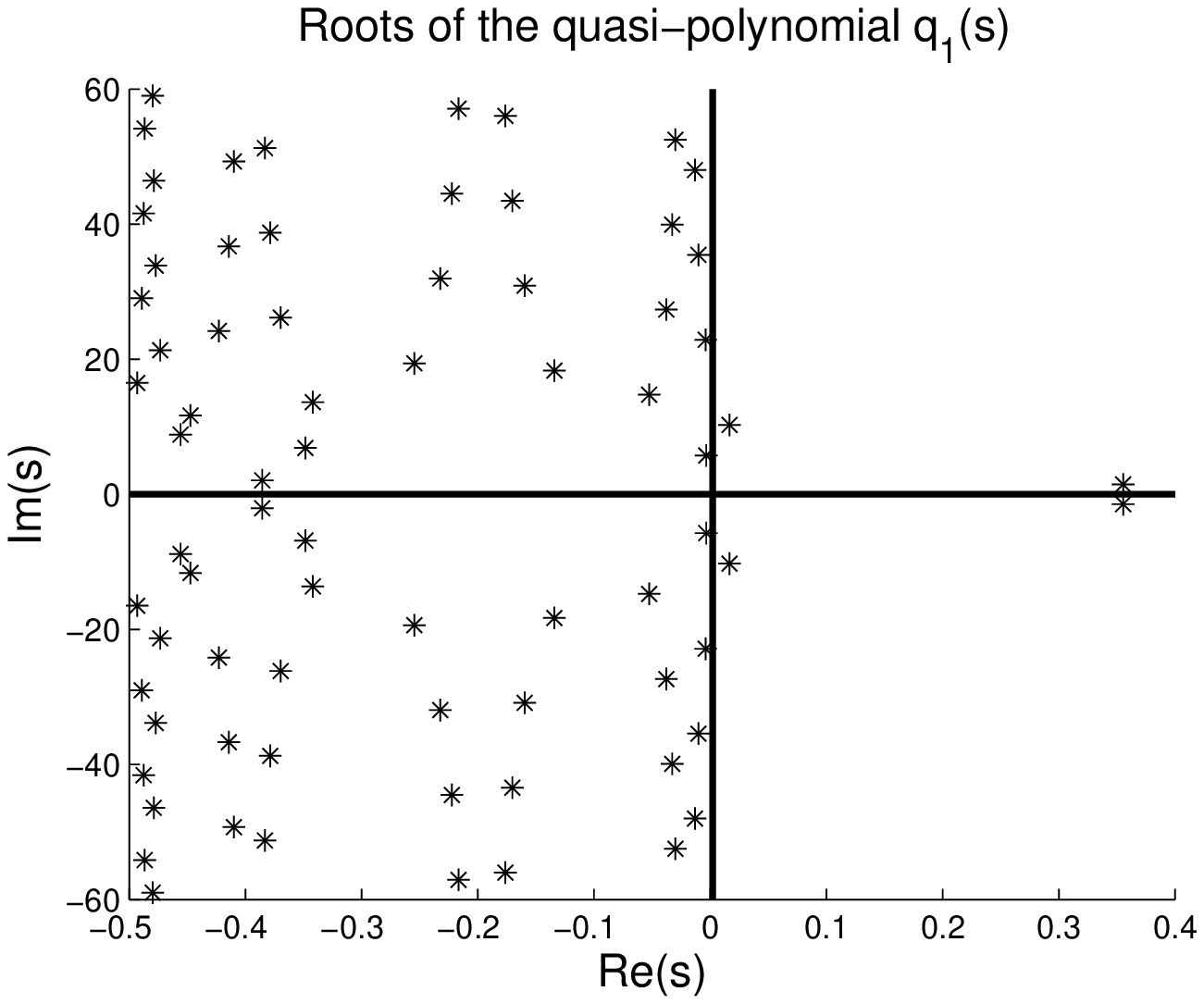}
        \caption{\label{fig:q1} The roots of the quasi-polynomial $q_1(s)$.}
    \end{minipage}
\hfill
    \begin{minipage}[t]{0.49\textwidth}
        \vspace{0pt}\raggedright
        \includegraphics[width=\linewidth]{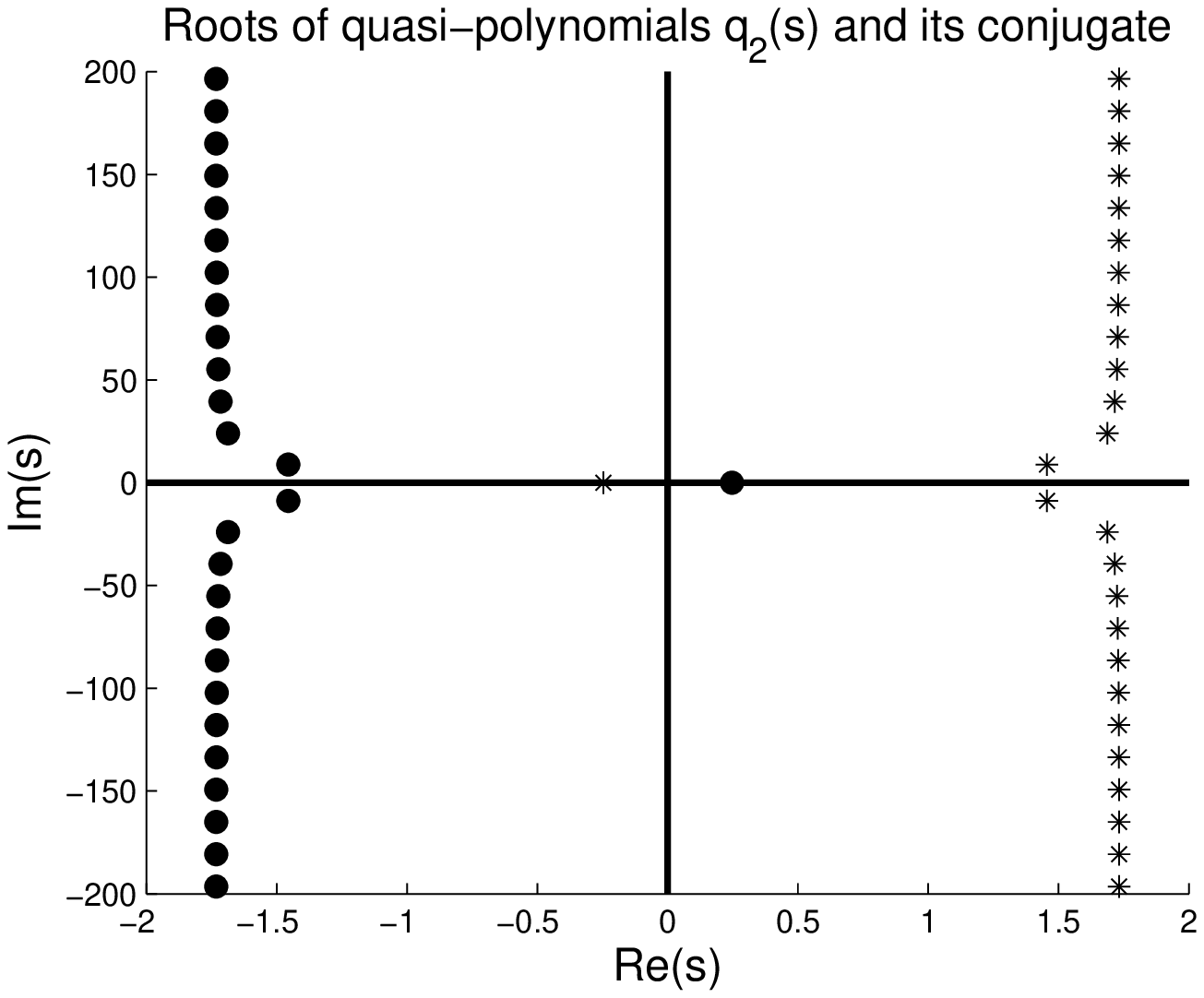}
        \caption{\label{fig:q2} The roots of quasi-polynomials $q_2(s)$ (with stars) and $\bar{q}_2(s)$ (with circles).}
   \end{minipage}
\end{figure}
\end{example}

We now give the key lemma for the inner-outer factorization of quasi-polynomials in a closed-form.
\begin{lemma} \label{lemma:qfact}			
Let $q(s)$ be a quasi-polynomial given in Definition \ref{def:qp} and satisfying Assumption \ref{assumption_asymptoticchain}. It admits an inner-outer factorization if $q(s)$ or $\bar{q}(s)$ has finitely many roots inside $\Cp$.
\end{lemma}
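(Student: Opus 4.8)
The plan is to establish existence constructively, splitting on the two situations permitted by the hypothesis and writing the target factorization as $q(s)=m_q(s)n_q(s)$ with $m_q$ inner and $n_q$ outer --- here \emph{outer} is meant in the generalized sense of a zero-free, minimum-phase factor, the bounded outer functions of Definition~\ref{def:inner_outer} arising only for the \emph{ratio} $N_o$ in (\ref{Pfact}). The structural fact I would record first is the boundary symmetry coming from Definition~\ref{def:qp_conj}: because the $q_i$ have real coefficients, $|\bar q(j\w)|=|q(-j\w)|=|q(j\w)|$ for every $\w\in\R$, so $q$ and $\bar q$ carry identical modulus on the imaginary axis. This is the bridge that lets a factorization of $\bar q$ be pulled back to $q$ in the second case. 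I would also note that Assumption~\ref{assumption_asymptoticchain} keeps every zero a fixed distance from $j\R$, so $\log|q(j\w)|$ grows only logarithmically and is admissible for an outer factor, and no zero lies on the axis.

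First suppose $q(s)$ has finitely many roots $r\in\Cp$ (with multiplicity). The idea is to route the whole ``excess phase'' of $q$ into $m_q$ in two pieces. The delay piece is $e^{-h_1 s}$: factoring it out produces $\hat q(s)=e^{h_1 s}q(s)=\sum_{i=1}^{v}q_i(s)e^{-(h_i-h_1)s}$, whose smallest delay is $0$ and whose Phragm\'en--Lindel\"of indicator is therefore $0$ in every direction of $\Cp$. The zero piece is the finite rational Blaschke product $B_q(s)=\prod_{r}\frac{s-r}{s+\bar r}$ over these roots, which carries exactly the unstable roots of $q$. Setting $m_q=e^{-h_1 s}B_q$ and $n_q=\hat q/B_q$, the factor $m_q$ is inner by inspection, while $n_q$ has no zeros in $\Cp$, satisfies $|n_q(j\w)|=|q(j\w)|$, and inherits zero right-half indicator; these three properties are precisely the certificate that $n_q$ is outer.

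Now suppose instead that $q$ has infinitely many unstable roots, so that by hypothesis and Corollary~\ref{cor:conjfinzeros} it is $\bar q$ that has finitely many roots $\rho\in\Cp$; the inner factor of $q$ must now be infinite-dimensional. I would apply the previous paragraph to $\bar q$ (whose smallest delay is $0$, so no delay is extracted) to get $\bar q=B_{\bar q}\,n_{\bar q}$ with $B_{\bar q}(s)=\prod_{\rho}\frac{s-\rho}{s+\bar\rho}$ and $n_{\bar q}$ outer. Using $q(s)=-e^{-h_v s}\bar q(-s)$ from Definition~\ref{def:qp_conj}, I propose
\[
n_q(s)=n_{\bar q}(s),\qquad m_q(s)=-\,e^{-h_v s}\,B_{\bar q}(-s)\,\frac{n_{\bar q}(-s)}{n_{\bar q}(s)},
\]
which indeed multiply back to $q$. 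The reflection $n_{\bar q}(-s)/n_{\bar q}(s)$ is an infinite all-pass whose $\Cp$ zeros are the reflections of the stable zeros of $n_{\bar q}$ --- exactly the unstable roots of $q$ --- and its zeros at the points $\bar\rho$ cancel the $\Cp$ poles that $B_{\bar q}(-s)$ contributes there, so $m_q$ is analytic on $\Cp$ with unit modulus on $j\R$.

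The part I expect to be genuinely delicate, and the real content of the lemma, is verifying that each candidate $m_q$ is \emph{inner} and not merely unimodular on the boundary, i.e. that $|m_q|\le 1$ throughout $\Cp$. This is a Smirnov-class statement: one must exclude any singular, exponentially growing factor. Extracting precisely $e^{-h_1 s}$ in the first case flattens the right-half indicator and makes a maximum-modulus argument available; in the second case the delay $e^{-h_v s}$ is exactly what offsets the exponential growth that the reflection $n_{\bar q}(-s)$ injects into $\Cp$, bringing the net right-half indicator of $m_q$ down to $-h_1\le 0$. Assumption~\ref{assumption_asymptoticchain} re-enters here to guarantee that the zeros of the infinite product satisfy the half-plane Blaschke condition and do not accumulate on $j\R$, so that $n_{\bar q}(-s)/n_{\bar q}(s)$ converges and is truly unimodular on the axis. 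Once $m_q$ is confirmed inner, $n_q=q/m_q$ is outer by the modulus and indicator bookkeeping above, completing both cases.
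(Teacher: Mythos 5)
Your proposal is correct and follows essentially the same route as the paper: split on whether $q$ or $\bar q$ has finitely many roots in $\Cp$, divide out the finite Blaschke product built from those roots, and in the second case transfer the factorization of $\bar q$ back to $q$ via $q(s)=-e^{-h_v s}\bar q(-s)$ (your formula for $m_q$ is algebraically identical to the paper's $\frac{q(s)}{\bar q(s)}m_{\bar q}(s)$). The two places where you are more careful than the paper --- extracting the common delay $e^{-h_1 s}$ into the inner factor in the first case (without which $q/m_q$ retains a singular inner factor whenever $h_1>0$), and actually verifying $|m_q|\le 1$ throughout $\Cp$ by a Phragm\'en--Lindel\"of/indicator argument rather than only checking unimodularity on $j\R$ --- are genuine refinements of steps the paper asserts without proof.
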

\begin{proof}
Let $q(s)$ be a quasi-polynomial with finitely many roots inside $\Cp$.  We construct a rational inner function $m_q(s)$  whose zeros are the roots of $q(s)$ inside $\Cp$. Then, we can factorize the quasi-polynomial $q(s)$ as
\be \label{qfactF}
q(s)=m(s)N(s)=m_q(s) \left(\frac{q(s)}{m_q(s)}\right).
\ee  Note that $m(s)$ is an inner function by construction of $m_q(s)$ and $N(s)$ is an outer function with no zeros or poles inside $\Cp$ since all zeros of $m_q(s)$ are cancelled by the roots of $q(s)$.

Similarly, let $\bar{q}(s)$ be a quasi-polynomial with finitely many roots inside $\Cp$. Having constructed a rational inner function $m_{\bar{q}}(s)$  whose zeros are the roots of $\bar{q}(s)$ inside $\Cp$, we factorize $q(s)$ as
\be \label{qfactI}
q(s)=m(s)N(s)=\left(\frac{q(s)}{\bar{q}(s)}m_{\bar{q}}(s)\right) \left(\frac{\bar{q}(s)}{m_{\bar{q}}(s)}\right)
\ee  The magnitude of $m(s)$ on the imaginary axis is $1$. All roots of $\bar{q}(s)$ inside $\Cp$ are cancelled by the zeros of $m_{\bar{q}(s)}$, hence $m(s)$ is stable. Then $m(s)$ and $N(s)$ are inner and outer functions respectively. The assertions follow.
\end{proof}

The main result of this section allows us to classify SISO time-delay systems (\ref{P}) admitting a coprime-inner/outer factorization (\ref{Pfact}) in a closed-form.
\begin{proposition} \label{prop:Pfact}		
Let $q_n(s)$ and $q_d(s)$ be quasi-polynomials given by Definition \ref{def:qp} and satisfying Assumption~\ref{assumption_asymptoticchain}. The plant $P(s)$ (\ref{P}) admits a coprime-inner/outer factorization (\ref{Pfact}) if either of the following holds:
\be \label{prop:cond}
\left.\begin{array}{c}
C_1:\quad q_n(s)\ \textrm{and}\ q_d(s)\quad \\
\textrm{or} \\
C_2:\quad \bar{q}_n(s)\ \textrm{and}\ q_d(s)\quad
\end{array}\right\}\ \textrm{have finitely many roots inside}\ \Cp.
\ee
Moreover, the factorization for each case is
\be
\begin{array}{llll} \label{Ccases}
C_1: &\quad m_n(s)=m_{q_n}(s), &\quad m_d(s)=m_{q_d}(s), &\quad N_o(s)=\frac{q_n(s)}{m_{q_n}(s)}\frac{m_{q_d}(s)}{q_d(s)},\\
C_2: &\quad m_n(s)=m_{\bar{q}_n}(s)\frac{q_n(s)}{\bar{q}_n(s)}, &\quad m_d(s)=m_{q_d}(s), &\quad N_o(s)=\frac{\bar{q}_n(s)}{m_{\bar{q}_n}(s)}\frac{m_{q_d}(s)}{q_d(s)}
\end{array}
\ee
where $m_{q_n}(s)$, $m_{\bar{q}_n}(s)$, and $m_{q_d}(s)$ are inner functions and their zeros are the roots of $q_n(s)$, $\bar{q}_n(s)$, and $q_d(s)$ inside $\Cp$ respectively.
\end{proposition}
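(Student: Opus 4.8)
The plan is to assemble the factorization directly from Lemma~\ref{lemma:qfact}, handling the denominator once and the numerator separately in the two cases. The first observation is that both $C_1$ and $C_2$ require $q_d(s)$ to have finitely many roots inside $\Cp$, so the Blaschke-type inner function $m_{q_d}(s)$ built over exactly those roots is a finite product, hence \emph{rational}. Applying the factorization (\ref{qfactF}) to $q_d$ then gives $q_d=m_{q_d}\,(q_d/m_{q_d})$ with $q_d/m_{q_d}$ outer, and I would set $m_d(s)=m_{q_d}(s)$, which is precisely the rational inner denominator demanded by (\ref{Pfact}) in both rows of (\ref{Ccases}).

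For the numerator I would split according to which of $q_n$, $\bar q_n$ has finitely many unstable roots. In case $C_1$ I apply (\ref{qfactF}) to $q_n$, obtaining the inner factor $m_{q_n}$ from Lemma~\ref{lemma:qfact} (which carries the unstable zeros of $q_n$ together with its leading delay, so that $q_n/m_{q_n}$ is genuinely outer) and setting $m_n=m_{q_n}$. In case $C_2$ I instead use the conjugate factorization (\ref{qfactI}): the all-pass quotient $q_n/\bar q_n$ has unit modulus on the imaginary axis, and its finitely many poles in $\Cp$ (the unstable roots of $\bar q_n$) are cancelled by the zeros of the inner factor $m_{\bar q_n}$, so $m_n=m_{\bar q_n}\,(q_n/\bar q_n)$ is an infinite-dimensional inner function with outer remainder $\bar q_n/m_{\bar q_n}$. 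In either row, dividing $P=q_n/q_d$ by $m_n/m_d$ and cancelling leaves
\[
N_o=\frac{q_n}{m_n}\,\frac{m_{q_d}}{q_d},
\]
which reproduces the two expressions in (\ref{Ccases}) once the corresponding $m_n$ is substituted, so that $P=m_n N_o/m_d$ holds identically.

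The step I expect to be the main obstacle is proving that this $N_o$ is truly \emph{outer}, and not merely analytic and zero-free in $\Cp$. The cancellation of all unstable roots of the numerator against $m_n$ and of all unstable roots of $q_d$ against $m_{q_d}$ shows that $N_o$ has neither poles nor zeros in the open right half plane; to place $N_o\in\mathcal{H}^\infty(\Cp^o)$ I would invoke the properness/causality inequalities $\deg q_{n,1}\le\deg q_{d,1}$ and $h_{n,1}\ge h_{d,1}$ to control the behaviour at infinity, together with Assumption~\ref{assumption_asymptoticchain}, which keeps the asymptotic chains off the imaginary axis so that the outer denominator factor stays bounded away from zero there. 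Outerness then follows structurally: $N_o=(q_n/m_n)/(q_d/m_{q_d})$ is a quotient of two outer functions lying in $\mathcal{H}^\infty(\Cp^o)$, and by the essential uniqueness of the inner-outer factorization a bounded quotient of outer functions is again outer. Finally I would remark that $m_n$ and $m_d$ share no zeros in $\Cp$ whenever $P$ carries no unstable pole-zero cancellation, so the resulting factorization is coprime as claimed; this disposes of both cases simultaneously.
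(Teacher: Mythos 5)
Your proof follows essentially the same route as the paper's: factorize $q_d(s)$ via (\ref{qfactF}) to obtain the rational inner function $m_{q_d}(s)$, factorize $q_n(s)$ via (\ref{qfactF}) or (\ref{qfactI}) according to whether $C_1$ or $C_2$ holds, and assemble $P=m_nN_o/m_d$ by combining the two applications of Lemma~\ref{lemma:qfact}. Your extra care over why $N_o$ is genuinely outer (a bounded quotient of outer functions in $\mathcal{H}^\infty(\Cp^o)$, using the properness conditions and Assumption~\ref{assumption_asymptoticchain}) and over the leading delay $e^{-h_{n,1}s}$ being absorbed into the inner factor only fills in details the paper's proof leaves implicit.
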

\begin{proof}
Since the coprime-inner/outer factorization in (\ref{Pfact}) requires a rational inner function $m_d(s)$, $q_d(s)$ in (\ref{P}) has to have finitely many roots inside $\Cp$. If $q_n(s)$ or $\bar{q}_n(s)$ has finitely many roots inside $\Cp$, by Lemma~\ref{lemma:qfact}, we can write the plant $P$ (\ref{P}) as
\bd
P(s)=\frac{q_n(s)}{q_d(s)}=\frac{m_n(s)N_n(s)}{m_d(s)N_d(s)}=\frac{m_n(s)N_o(s)}{m_d(s)}
\ed using the inner-outer factorizations of $q_n(s)$ and $q_d(s)$. The functions $m_n(s)$ and $m_d(s)$ are inner functions and $N_o(s)$ is an outer function as in (\ref{Pfact}). Since $q_d(s)$ is rational, it is factorized using (\ref{qfactF}). We get the factorizations in $C_1$ and $C_2$ at (\ref{Ccases}), when $q_n(s)$ is factorized using (\ref{qfactF}) and (\ref{qfactI}) respectively. The assertions follow.
\end{proof}

Proposition~\ref{prop:Pfact} give the conditions and the factorizations for the admissible plants. These conditions can be easily checked by Lemma~\ref{lemma:finitezeros} and Corollary~\ref{cor:conjfinzeros}. Note that we need to compute the roots of quasi-polynomials $q_d(s)$ and  $q_n(s)$ or $\bar{q}_n(s)$. There are available numerical methods and tools for the computation of roots of quasi-polynomials inside $\Cp$, see \cite{BIFTOOLManual20, BredaTraceDDE09, VZTAC09}.

\begin{example} \label{ex:fact}
We determine the admissible plants of the following SISO time-delay systems (\ref{P}),
\bd
\begin{array}{lll}
P_1(s)=\frac{s^2-2s+3+0.2se^{-s}}{s^3+1+e^{-1.5s}},  & & \vspace{2mm} P_2(s)=\frac{(s-1)e^{-0.2s}+(0.1s+1)e^{-0.3s}+(0.2s-3)e^{-s}}{3s+0.5+(2s+7)e^{-1.5s}+(s-1)e^{-2s}}, \\
P_3(s)=\frac{s+3+(2s-2)e^{-0.4s}}{s^2+se^{-0.2s}+5e^{-0.5s}}, & &
\end{array}
\ed
Both $q_n(s)$ and $q_d(s)$ of $P_1(s)$ are retarded quasi-polynomials and they have finitely many roots inside $\Cp$ by Lemma~\ref{lemma:finitezeros}. The plant $P_1(s)$ is an admissible plant by Proposition~\ref{prop:Pfact}. The quasi-polynomial at the denominator of $P_2(s)$ is considered in Example~\ref{ex_qroots} as $q_1(s)$ and it has finitely many roots inside $\Cp$. The numerator $q_n(s)$ of $P_2(s)$ is a neutral quasi-polynomial. Its asymptotic polynomial is $p_n(s)=0.2s^8+0.1s+1$ and the magnitudes of its roots are $\{1.2396, 1.2306, 1.2164, 1.2051\}$ which are greater than $1$. By Lemma~\ref{lemma:finitezeros}, $q_n(s)$ and  $q_d(s)$ have finitely many roots inside $\Cp$ and the plant $P_2(s)$ is admissible. The numerator of $P_3(s)$ is considered in Example~\ref{ex_qroots} as $q_2(s)$ and its conjugate $\bar{q}_n(s)$ has finitely many roots inside $\Cp$. The denominator of $P_3(s)$ is a retarded quasi-polynomial with  finitely many roots inside $\Cp$ by Lemma~\ref{lemma:finitezeros}. Since $\bar{q}_n$ and $q_d(s)$ have finitely many roots inside $\Cp$, we conclude that the plant $P_3(s)$ is admissible by Proposition~\ref{prop:Pfact}. The coprime-inner/outer factorizations of the admissible plants $P_1(s)$, $P_2(s)$ using $C_1$ in (\ref{Ccases}) and $P_3(s)$ using $C_2$ in (\ref{Ccases}) are given in Appendix~\ref{sec:app_fact}.
\end{example}

Our main result in this section, Proposition~\ref{prop:Pfact}, allows us to obtain the coprime-inner/outer factorization (\ref{Pfact}) of the admissible plants. Using factorization terms, $m_n(s)$, $m_d(s)$ and $N_o(s)$ (\ref{Ccases}), we compute the optimal $\Hi$ performance and give the structure of the optimal $\Hi$ controller in the next section.

\section{FIR Structure of Optimal H-Infinity Controllers} \label{sec:implement}

We can compute the optimal $\Hi$ performance $\gamma_{opt}$ for the optimal $\Hi$ controller design problem~(\ref{eq:wsm1}) using the coprime-inner/outer factorization terms of (\ref{Pfact}) of the SISO time-delay system $P(s)$~(\ref{P}) and the given weight functions $W_1(s)$,$W_2(s)$ \cite{TOTAC95}. As described in the paper, we sweep a scalar parameter $\gamma$ over an interval and compute the minimum singular value of a matrix, $\sigma_{\min}(M(\gamma))$, depending on the factorization terms $m_n(s)$, $m_d(s)$ and weight functions. The largest value of $\gamma$ where $\sigma_{\min}(M(\gamma))=0$ is the optimal $\Hi$ performance $\gamma_{opt}$.

Having computed $\gamma_{opt}$, we obtain the optimal $\Hi$ controller for the problem (\ref{eq:wsm1}) as
 \be \label{Copt}
C_{opt}(s)=\frac{m_d(s) E_{\gamma_{opt}}(s) F_{\gamma_{opt}}(s)  L_{\gamma_{opt}}(s) N_o^{-1}(s)}{1+m_n(s) F_{\gamma_{opt}}(s)L_{\gamma_{opt}}(s)}
\ee where $E_{\gamma_{opt}}(s)$, $F_{\gamma_{opt}}(s)$ and $L_{\gamma_{opt}}(s)$ are rational transfer functions depending on $\gamma_{opt}$ and see \cite{TOTAC95} for further details.

The optimal $\Hi$ controller (\ref{Copt}) contains the factorization terms  $m_n(s)$, $m_d(s)$ and $N_o(s)$. By construction in (\ref{Ccases}), these terms may have unstable pole-zero cancellations. Moreover, due to interpolation conditions, there are additional unstable pole-zero cancellations in optimal $\Hi$ controllers \cite{TOTAC95}. Unlike the finite dimensional case, the factorization terms are infinite dimensional because of time-delays and therefore exact cancellations are not possible. Our main result in Section~\ref{sec:fir} allows us to deal with unstable pole-zero cancellations in SISO time-delay systems.  In Section~\ref{sec:implement_copt}, by rearrangement of terms in the optimal $\Hi$ controller and using this main result, we eliminate unstable pole-zero cancellations in the optimal $\Hi$ controller (\ref{Copt}). This leads to optimal $\Hi$ controllers with FIR structures and extends the results in \cite{MZTAC00, MeinsmaTAC2002, MMTAC04} by showing that the FIR structure of $\Hi$ controllers appears not only for plants with I/O delays, but also for general SISO time-delay plants.

\subsection{FIR structure in unstable pole-zero cancellations of SISO time-delay systems} \label{sec:fir}

The unstable pole-zero cancellations in a transfer function can be eliminated for the finite dimensional case. This elimination is not possible for time-delay systems due to the irrational transfer functions. We consider SISO time-delay systems
\be \label{eq:G}
G(s)=\sum_{k=1}^v G_k(s)e^{-h_k s}
\ee where $G_k(s)$, $k\in\N{1}{v}$ are proper, rational transfer functions. We show that unstable pole-zero cancellations between a SISO time-delay system (\ref{eq:G}) and a rational SISO system can be transformed into a transfer function whose impulse response has a finite support, known as an FIR filter. The following proposition is a key result to eliminate unstable pole-zero cancellations in the optimal $\Hi$ controllers.
\begin{proposition} \label{prop:fir}
Let $G(s)$ be a SISO time-delay system (\ref{eq:G}), $G_0(s)$ be a bi-proper, rational system and $\mathcal{S}_z^+$ be the nonempty set of common $\mathbb{C}_+$ zeros of $G(s)$ and $G_0(s)$. If the transfer function $\frac{G(s)}{G_0(s)}$ is decomposed as $H(s)+F(s)$ such that $H(s)$ has no poles in $\mathcal{S}_z^+$ and the poles of $F(s)$ are the elements of $\mathcal{S}_z^+$, then $H(s)$ has no unstable pole-zero cancellations from $\mathcal{S}_z^+$ and $F(s)$ is an FIR filter with a support $t\in[0,h_v]$.
\end{proposition}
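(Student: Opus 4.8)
The plan is to build the decomposition explicitly through a term-by-term partial-fraction expansion, dispose of the claim about $H(s)$ by construction, and then establish the FIR property of $F(s)$ by a residue computation that recombines the delayed terms back into $G(s)$. First I would write $\frac{G(s)}{G_0(s)}=\sum_{k=1}^v\frac{G_k(s)}{G_0(s)}e^{-h_k s}$ and, for each $k$, split the rational function $\frac{G_k(s)}{G_0(s)}$ into partial fractions, collecting into $P_k(s)$ exactly those terms whose poles lie in $\mathcal{S}_z^+$ and into $R_k(s)$ everything else (the feed-through constant and the partial fractions at the remaining poles). Because $G_0(s)$ is bi-proper, each $\frac{G_k(s)}{G_0(s)}$ is proper, so the feed-through term is absorbed into $R_k(s)$ and every $P_k(s)$ is strictly proper. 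Setting $H(s)=\sum_k R_k(s)e^{-h_k s}$ and $F(s)=\sum_k P_k(s)e^{-h_k s}$ realizes the required decomposition, with $H(s)$ having no poles in $\mathcal{S}_z^+$ by construction and the apparent poles of $F(s)$ confined to $\mathcal{S}_z^+$. The first claim is then immediate: since $H(s)$ carries no pole at any $z_0\in\mathcal{S}_z^+$, there is nothing at those points for a zero to cancel, so $H(s)$ has no unstable pole-zero cancellation from $\mathcal{S}_z^+$; all such cancellations have been shunted into $F(s)$.

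For the FIR claim the key is to compute the impulse response $f(t)=\mathcal{L}^{-1}\{F\}(t)$ for $t>h_v$. Because each $P_k(s)$ is strictly proper with poles only in $\mathcal{S}_z^+$, its inverse transform satisfies $p_k(t)=\sum_{z_0\in\mathcal{S}_z^+}\mathrm{Res}_{s=z_0}\big[\frac{G_k(s)}{G_0(s)}e^{st}\big]$ for $t>0$ (the residue at $z_0$ depends only on the principal part there, which $P_k$ and $\frac{G_k}{G_0}$ share), and the $k$-th term of $F$ contributes $p_k(t-h_k)u(t-h_k)$. For $t>h_v\geq h_k$ every step is active, so I would write
\[
f(t)=\sum_{k=1}^v\sum_{z_0\in\mathcal{S}_z^+}\mathrm{Res}_{s=z_0}\!\left[\frac{G_k(s)}{G_0(s)}e^{s(t-h_k)}\right]
=\sum_{z_0\in\mathcal{S}_z^+}\mathrm{Res}_{s=z_0}\!\left[\frac{e^{st}}{G_0(s)}\sum_{k=1}^v G_k(s)e^{-h_k s}\right]
=\sum_{z_0\in\mathcal{S}_z^+}\mathrm{Res}_{s=z_0}\!\left[\frac{G(s)}{G_0(s)}e^{st}\right],
\]
where the decisive step uses linearity of the residue together with the fact that the entire factors $e^{-h_k s}$ reassemble into $\sum_k G_k(s)e^{-h_k s}=G(s)$. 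Since each $z_0\in\mathcal{S}_z^+$ is a common $\mathbb{C}_+$ zero of $G(s)$ and $G_0(s)$, the zero of $G$ annihilates the pole that $1/G_0$ would create, so $\frac{G(s)}{G_0(s)}$ has a removable singularity at $z_0$; hence $\frac{G(s)}{G_0(s)}e^{st}$ is analytic there and every residue vanishes.

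It follows that $f(t)=0$ for $t>h_v$. Strict properness of each $P_k$ means $F$ carries no impulsive part, and causality (every term carries $u(t-h_k)$ with $h_k\geq0$) gives $f(t)=0$ for $t<0$; therefore $F(s)$ is a genuine FIR filter with support contained in $[0,h_v]$, which also reconciles with the fact that an FIR transfer function is entire and so the apparent poles of $F$ at $\mathcal{S}_z^+$ are in truth removable.

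I expect the main subtlety to lie in the removable-singularity step when the common zeros are not simple. For a simple common zero the residue reduces to $G(z_0)/G_0'(z_0)=0$ and the conclusion is instant; for a zero of $G_0$ of multiplicity $m$, the residue of $\frac{G(s)}{G_0(s)}e^{st}$ involves $G(z_0),G'(z_0),\dots,G^{(m-1)}(z_0)$, so I would need $z_0$ to be a zero of $G$ of multiplicity at least $m$ for all of these to vanish and the singularity to be genuinely removable. I would therefore either read membership in $\mathcal{S}_z^+$ as carrying this matching-multiplicity requirement, or verify it from the construction that produces $\mathcal{S}_z^+$ in the optimal $\Hi$ controller. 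The remaining bookkeeping — strict properness ruling out a distributional part and the support beginning no earlier than $h_1\geq0$ — is routine.
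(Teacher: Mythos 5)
Your proposal is correct and follows essentially the same route as the paper: the same term-by-term partial-fraction decomposition of $G_k(s)/G_0(s)$, followed by a residue computation for $t>h_v$ in which the delayed terms recombine into $G(z_0)=0$. Your explicit treatment of the matching-multiplicity issue at non-simple common zeros is in fact more careful than the paper's, which dismisses that case with ``by similar arguments.''
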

\begin{proof} For simplicity assume that $z_1,z_2,\dots,z_{n_z}\in\mathcal{S}_z^+$ are distinct. We decompose $\frac{G(s)}{G_0(s)}$ into two terms as
\be \label{eq:phi}
\frac{G(s)}{G_0(s)}=\sum_{i=1}^v \frac{G_i(s)}{G_0(s)}e^{-h_i s}=H(s)+F(s)=\left(\sum_{i=1}^v H_i(s)e^{-h_i s}\right)+\left(\sum_{i=1}^v F_i(s)e^{-h_i s}\right)
\ee where $F_i(s)$ is strictly proper and its poles are all elements of $\mathcal{S}_z^+$ obtained by the partial fraction, $\frac{G_i(s)}{G_0(s)}=H_i(s)+F_i(s)$ for $i\in\N{1}{n_z}$. By construction, $H(s)$ has no poles in $\mathcal{S}_z^+$ and $F(z_k)$ is finite for $k\in\N{1}{n_z}$. We prove the assertion if we can show that the inverse Laplace transform of $F(s)$ has a compact support and its support is equal to $[0,h_v]$. Since the function $F(s)$ is entire and it can be bounded by an exponential function as $|F(s)|\leq C e^{\delta|s|}$ for $\delta>0$, the support of its inverse Laplace transform is compact and inside $[-\delta,\delta]$ by the Paley-Wiener theorem \cite{Rudin1987}. In order to show that its impulse response $f(t)\equiv0$ for $t>h_v$, the inverse Laplace transform of $F(s)$ is written as
\bd
f(t)=\sum_{k=1}^{n_z}\left[\sum_{i=1}^v \textrm{Res}
(F_i(s))\big|_{s=z_k}\;e^{z_k(t-h_i)}u(t-h_i) \right]
\ed where $u(t)$ and $\textrm{Res(.)}$ are the unit step function and the residue of the function respectively. For $t>h_v$, we have
\bd
f(t)=\sum_{k=1}^{n_z}e^{z_k t}\left[\sum_{i=1}^v \textrm{Res}(F_i(s))\big|_{s=z_k}\;e^{-h_i z_k} \right].
\ed Using $\textrm{Res}(F_i(s))\big|_{s=z_k}=\frac{G_i(z_k)}{\textrm{Res}(G_0(s))\big|_{s=z_k}}$ for $i\in\N{1}{n_v}$ and $k\in\N{1}{n_z}$, the impulse response of $F(s)$ is equal to
\bd
f(t)=\sum_{k=1}^{n_z}\frac{e^{z_k t}}{\textrm{Res}(G_0(s))\big|_{s=z_k}}\left[\sum_{i=1}^v G_i(z_k)\;e^{-h_i z_k} \right]=\sum_{k=1}^{n_z}\frac{e^{z_k t}}{\textrm{Res}(G_0(s))\big|_{s=z_k}} G(z_k)\equiv0\ \ \textrm{for}\  t>h_v.
\ed The last equivalence follows from the fact that $z_k$, $k\in\N{1}{n_z}$ are the zeros of
$G(s)$, i.e., $G(z_k)=0$ and ${\textrm{Res}(G_0(s))\big|_{s=z_k}} \neq 0$.  By similar arguments, the results can be proven for common roots with multiplicities in $\mathcal{S}_z^+$. The assertion follows.
\end{proof}

We define the operator $\Phi$ which decomposes the transfer function $\frac{G(s)}{G_0(s)}$ as in Proposition \ref{prop:fir}, i.e.,
\bd
\Phi(G(s),G_0(s))=H(s)+F(s).
\ed The operator decomposes each rational SISO transfer functions $\frac{G_i(s)}{G_0(s)}$, $i\in\N{1}{v}$ in $\frac{G(s)}{G_0(s)}$ into two rational transfer functions $H_i(s)$ and $F_i(s)$ (\ref{eq:phi}) where the poles of $F_i(s)$ are all elements of $\mathcal{S}_z^+$. It combines $H_i(s)$ and $F_i(s)$, $i\in\N{1}{v}$ separately and obtains the transfer functions $H(s)$ and $F(s)$ respectively.

Note that the decomposition $\Phi(G(s),G_0(s))$ eliminates unstable pole-zero cancellations in $G(s)$ and $G_0(s)$ and by Proposition~\ref{prop:fir}, the cancellation terms are contained in an FIR term. This can be seen as the extension of exact pole-zero cancellations in a finite dimensional system to that in a time-delay system.

\begin{example} \label{ex:fir}
Consider the following SISO time-delay and the rational SISO systems,
\[
G(s)=\frac{(-18.5952s^2-27.8651s+18.5796)(s^3+1+e^{-1.5s})}
{s^5+15s^4+59s^3+97s^2+72s+20},
\ \  \ G_0(s)=\frac{s^2-1.2470s+1.1137}{s^2+1.2470s+1.1137}.
\] The common unstable zeros of $G(s)$ and $G_0(s)$ are $0.6235\pm0.8514j$. The unstable pole-zero cancelations can be eliminated by the decomposition operator $\Phi$ as
\begin{multline*}
\frac{G(s)}{G_0(s)}=\Phi(G(s),G_0(s))=F(s)+H(s) \\
\textrm{where}\quad  F(s)=\frac{(-0.1260s+0.3061)-(0.5588s+0.0810)e^{-1.5s}}{s^2-1.2470s+1.1137}.
\end{multline*}

The impulse response of $F(s)$ is given in Figure~\ref{fig:F}. As proven in Proposition~\ref{prop:fir}, the response has a finite support, i.e., $[0,1.5]$ and therefore $F(s)$ is an FIR.

\begin{figure}[h]
   \centering \includegraphics[width=8cm]{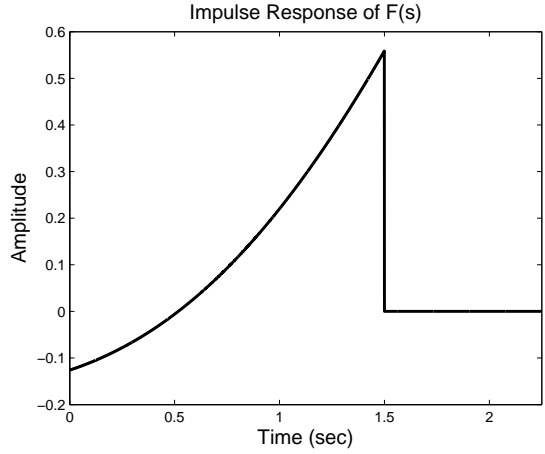}
   \caption{The impulse response of $F(s)$} \label{fig:F}
\end{figure}
\end{example}

By Proposition~\ref{prop:fir}, we see that any SISO time-delay system with unstable pole-zero cancellations has an FIR structure. Since $\Hi$ controllers have always have unstable pole-zero cancellations due to the interpolation conditions, it is easy to see why $\Hi$ controllers of time-delay systems have FIR structures. This structure is illustrated in the next section.

\subsection{FIR structure of optimal $\Hi$ controllers} \label{sec:implement_copt}

The coprime-inner/outer factorization of the admissible plants is given in (\ref{Ccases}) of Proposition~\ref{prop:Pfact}. Using the factorized terms $m_n(s)$, $m_d(s)$ and $N_o(s)$, we calculate the optimal $\Hi$ performance $\gamma_{opt}$ and the optimal $\Hi$ controller (\ref{Copt}) as in  \cite{FOT,TOTAC95}.  The admissible plants are composed of two groups: $q_n(s)$, $q_d(s)$ of $P(s)$ (\ref{P}) have finitely many $\Cp$ roots ($C_1$ case in (\ref{Ccases})) and  $\bar{q}_n(s)$, $q_d(s)$ of $P(s)$ (\ref{P}) have finitely many $\Cp$ roots ($C_2$ case in (\ref{Ccases})). We will eliminate unstable pole-zero cancellations for each case and give the FIR structure of corresponding optimal $\Hi$ controllers.

\subsubsection*{Optimal $\Hi$ controller structure when $q_n(s)$ and $q_d(s)$ has finitely many $\Cp$ roots}

By substituting the factorization terms for $C_1$ case of (\ref{Ccases}), the optimal $\Hi$ controller can be
written as,
\be \label{Copt1}
C_{opt}(s)=\frac{\left(E_{\gamma_{opt}}(s)F_{{\gamma}_{opt}}(s)m_{q_d}(s)L_{{\gamma}_{opt}}(s)\right)\left(\frac{q_d(s)}{m_{q_d}(s)}\right)\left(\frac{q_n(s)}{m_{q_n}(s)}\right)^{-1}}
{\left(1+m_{q_{n}}(s)F_{\gamma_{opt}}(s)L_{\gamma_{opt}}(s)\right)}.
\ee The optimal $\Hi$ controller has unstable pole-zero cancellations for the terms inside the second and third parenthesis in the numerator (\ref{Copt1}) because of the construction of the factorization terms. Due to the interpolation conditions, the zeros inside $\Cp$ of  $E_{\gamma_{opt}}(s)$ and $m_{q_d}(s)$ are cancelled by the term in the denominator of (\ref{Copt1}). These cancellations are eliminated and the structure of the $\Hi$ controller is obtained as the following.
\begin{enumerate}
    \item Factorize the rational transfer function  $E_{\gamma_{opt}}(s)F_{{\gamma}_{opt}}(s)m_{q_d}(s)L_{{\gamma}_{opt}}(s)$ as
    $\theta_n(s)\theta_d(s)$ where $\theta_n(s)$ is a bi-proper transfer function whose zeros are $\Cp$ zeros of $E_{\gamma_{opt}}(s)$ and $m_{q_d}(s)$,
    \item Rewrite the controller (\ref{Copt1}) as
	\be \label{CoptFF}
		 C_{opt}(s)=\left(\frac{\theta_d(s)q_d(s)}{m_{q_d}(s)}\right)\left(\frac{q_n(s)(1+m_{q_n}(s)F_{\gamma_{opt}}(s)L_{\gamma_{opt}}(s))}{\theta_n(s)m_{q_n}(s)}\right)^{-1}
	\ee where the zeros inside $\Cp$ of the denominator terms in each parenthesis are the zeros of the numerator terms.
    \item Eliminate unstable pole-zero cancellations by the decomposition operator $\Phi$,
 \bea
\nonumber H_n(s)+F_n(s)&=&\Phi(\theta_d(s)q_d(s),m_{q_d}(s)), \\
\nonumber H_d(s)+F_d(s)&=&\Phi(q_n(s)(1+m_{q_n}(s)F_{\gamma_{opt}}(s)L_{\gamma_{opt}}(s)),\theta_n(s)m_{q_n}(s)).
\eea
\end{enumerate}
 Then, the optimal $\Hi$ controller has no unstable pole-zero cancellations as,
\be \label{Coptform}
C_{opt}(s)=\frac{H_n(s)+F_n(s)}{H_d(s)+F_d(s)}
\ee where $F_n(s)$ and $F_d(s)$ are FIR filters by Proposition~\ref{prop:fir}.

\subsubsection*{Optimal $\Hi$ controller structure when $\bar{q}_n(s)$ and $q_d(s)$ has finitely many $\Cp$ roots}

Similarly, we substitute the factorization terms for $C_2$ case of (\ref{Ccases}) and rearrange the optimal $\Hi$ controller as
	\be \label{CoptIF}
		 C_{opt}(s)=\left(\frac{\theta_d(s)q_d(s)}{m_{q_d}(s)}\right)\left(\frac{(\bar{q}_n(s)+m_{\bar{q}_n}(s)q_n(s)F_{\gamma_{opt}}(s)L_{\gamma_{opt}}(s))}{\theta_n(s)m_{\bar{q}_n}(s)}\right)^{-1}
	\ee where  $E_{\gamma_{opt}}(s)F_{{\gamma}_{opt}}(s)m_{q_d}(s)L_{{\gamma}_{opt}}(s)$ is factorized  $\theta_n(s)\theta_d(s)$ as in the previous section. The unstable pole-zero cancellations are eliminated by the decomposition operator $\Phi$,
 \bea
\nonumber H_n(s)+F_n(s)&=&\Phi(\theta_d(s)q_d(s),m_{q_d}(s)), \\
\nonumber H_d(s)+F_d(s)&=&\Phi((\bar{q}_n(s)+m_{\bar{q}_n}(s)q_n(s)F_{\gamma_{opt}}(s)L_{\gamma_{opt}}(s)),\theta_n(s)m_{\bar{q}_n}(s)).
\eea

The optimal $\Hi$ controller for this case has the same form as in (\ref{Coptform}).

We conclude this section with few remarks. The FIR structure of optimal $\Hi$ controller (\ref{Coptform}) is the extension of well-known results for I/O time-delays \cite{MZTAC00} to general SISO time-delay systems. Note that the FIR system in time-delay systems is generally not so easy to implement due to the sensitivity to the numerical errors. For the digital implementation (\ref{Coptform}), see \cite{ZhongTAC2004, ZhongTAC2005, MirkinSCL2004}. The designed optimal $\Hi$ controller for the admissible plants is a time-delay system. If a rational $\Hi$ controller is desired in a practical implementation, the optimal $\Hi$ controller can be approximated by a rational controller with a $\Hi$ performance level close to the optimal $\Hi$ norm, see \cite{OIJC91,TOIJRNC96,PARC04}.


\section{Numerical Examples} \label{sec:examples}
We compute optimal $\Hi$ performances and design optimal $\Hi$ controllers of the control problem~(\ref{eq:wsm1}) for SISO time-delay systems $P_1(s)$, $P_2(s)$, $P_3(s)$ in Example~\ref{ex:fact}. The weight functions for each plant are given in Table~\ref{table:problemdata}.
\begin{table}[h]
\begin{center}
\begin{tabular}{ccccccc}
  Plants & & $W_1(s)$ & & $W_2(s)$ & & $\gamma_{opt}$\\
  \hline \\[-6pt]
  $P_1(s)$ &  & $(0.1s+1)/(s+2)$ & & $0$          & & $1.8595$ \vspace{1mm} \\
  $P_2(s)$ &  & $(0.1s+1)/(s+2)$ & & $0.2(s+1.1)$ & & $0.9579$ \vspace{1mm}\\
  $P_3(s)$ &  & $(s+1)/(10s+1)$  & & $0.5$        & & \hspace{-1.2mm}$0.5534$\\
  \hline
\end{tabular}
\end{center}
\caption{The plants, weight functions and the corresponding optimal $\Hi$ performances for the control problem (\ref{eq:wsm1}).} \label{table:problemdata}
\end{table}

The coprime-inner/outer factorizations of plants $P_1(s)$, $P_2(s)$, $P_3(s)$  are obtained in Appendix~\ref{sec:app_fact}. Using the factorization terms and weights of each plant, the optimal $\Hi$ performances $\gamma_{opt}$ are computed by \cite{FOT,TOTAC95} and are given in the last column of Table~\ref{table:problemdata}. Having calculated $E_{\gamma_{opt}}(s)$, $F_{{\gamma}_{opt}}(s)$, $L_{{\gamma}_{opt}}(s)$, the optimal $\Hi$ controller can be written as (\ref{CoptFF}) for $P_1(s)$, $P_2(s)$ and as (\ref{CoptIF}) for $P_3(s)$. We eliminate unstable pole-zero cancelations in optimal $\Hi$ controllers by the decomposition operator $\Phi$ and obtain the transfer function in (\ref{Coptform}). The transfer functions of the terms $H_n(s)$, $F_n(s)$, $H_d(s)$, $F_d(s)$ in optimal $\Hi$ controllers for $P_1(s)$, $P_2(s)$, $P_3(s)$ are given in Appendix~\ref{sec:app_Copts}. The impulse responses of $F_n(s)$ and $F_d(s)$ for $P_1(s)$, $P_2(s)$, $P_3(s)$ have a finite support as  shown in Figure~\ref{fig:F}--\ref{fig:Fd3}. This illustrates that optimal $\Hi$ controllers have FIR structure for general SISO time-delay systems.

\begin{figure}[h]
\centering \includegraphics[width=8cm]{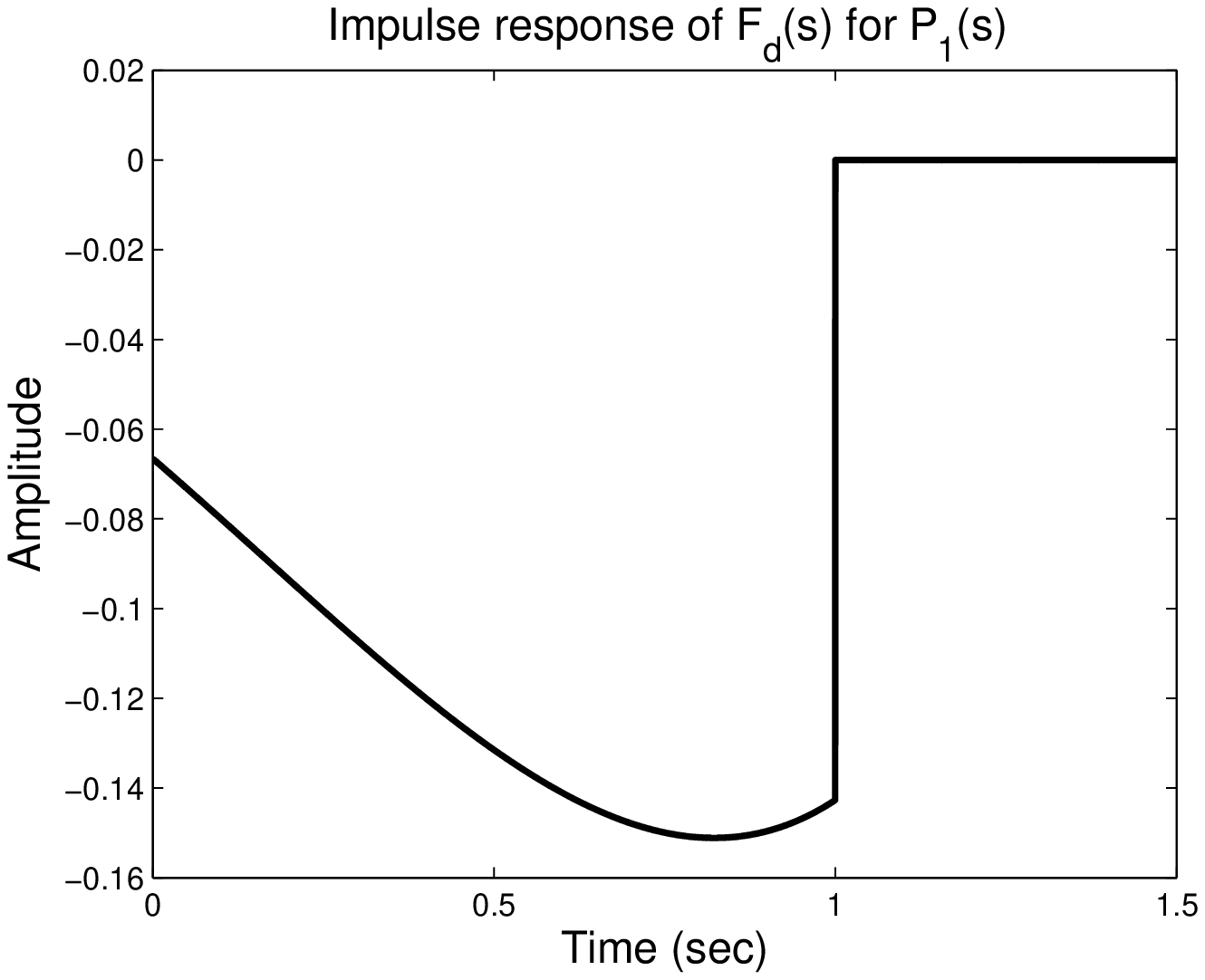}
\caption{The impulse response of $F_d(s)$ for $P_1(s)$} \label{fig:Fd1}
\begin{minipage}[b]{0.5\linewidth}
   \centering \includegraphics[width=6cm]{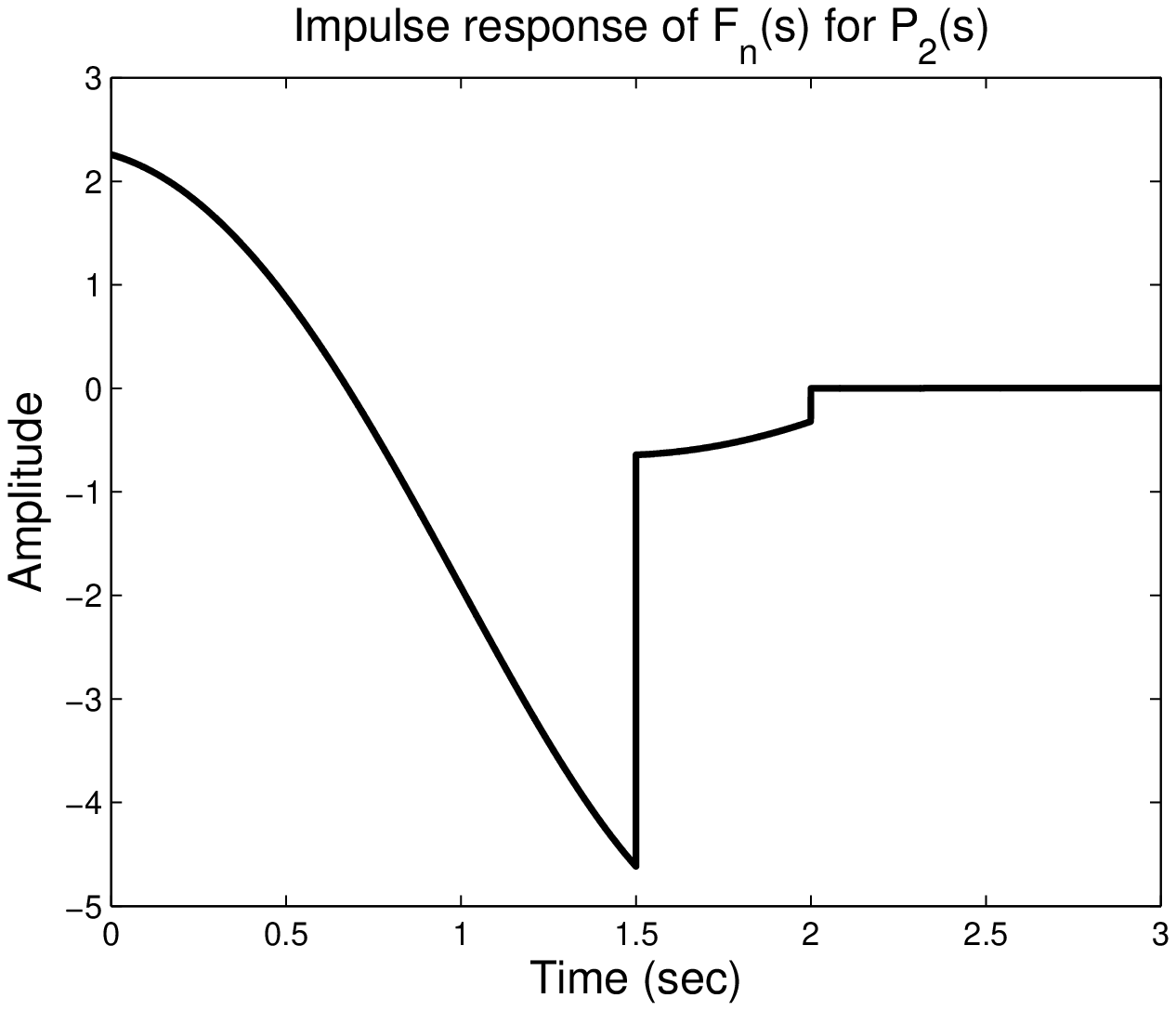}
   \caption{The impulse response of $F_n(s)$ for $P_2(s)$} \label{fig:Fn2}
\end{minipage}%
\begin{minipage}[b]{0.5\linewidth}
   \centering \includegraphics[width=6cm]{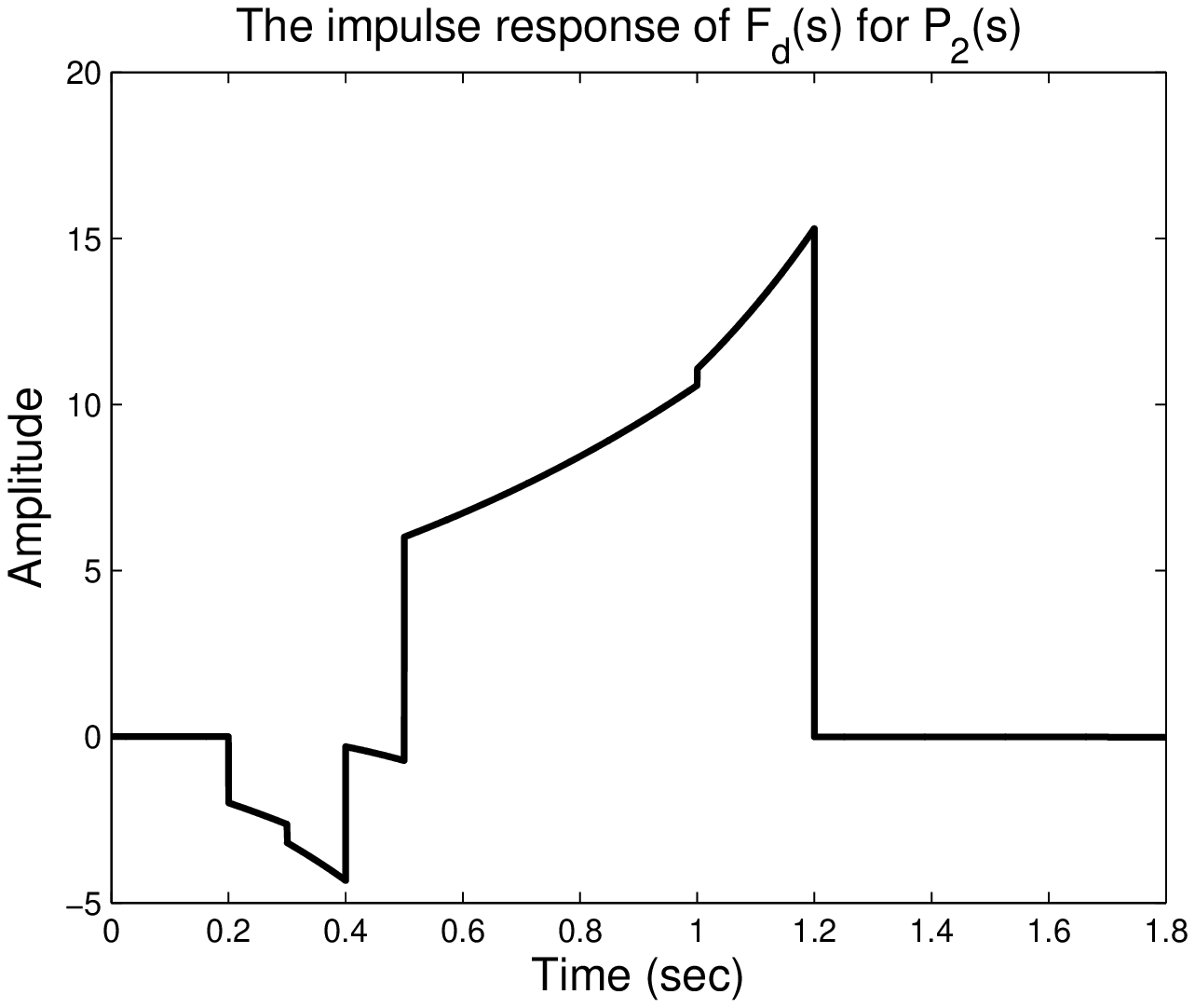}
   \caption{The impulse response of $F_d(s)$ for $P_2(s)$} \label{fig:Fd2}
\end{minipage}
\end{figure}
 \begin{figure}[h]
\begin{minipage}[b]{0.5\linewidth}
   \centering \includegraphics[width=6cm]{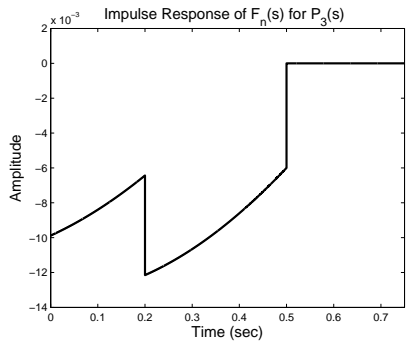}
   \caption{The impulse response of $F_n(s)$ for $P_3(s)$} \label{fig:Fn3}
\end{minipage}%
\begin{minipage}[b]{0.5\linewidth}
   \centering \includegraphics[width=6cm]{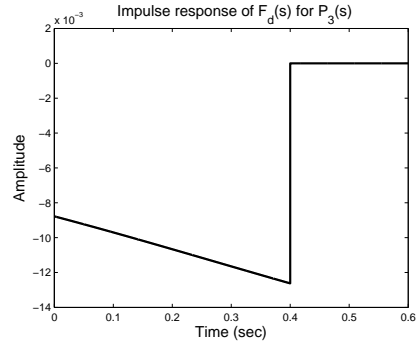}
   \caption{The impulse response of $F_d(s)$ for $P_3(s)$} \label{fig:Fd3}
\end{minipage}
\end{figure}

\section{Concluding remarks} \label{sec:conc}

We computed the optimal $\Hi$ performance and designed the optimal $\Hi$ controller for general SISO time-delay systems using the approach in \cite{FOT,TOTAC95}. We classified SISO plants admitting a coprime-inner/outer factorization. We gave the conditions to check the admissible plants and gave the explicit factorization terms. Using the factorization terms, we obtained optimal $\Hi$ controllers. We eliminated unstable pole-zero cancellations in these controllers and showed that the FIR structure of $\Hi$ controllers appears not only for I/O time-delay plants but general SISO time-delay plants. We illustrated our design on numerical examples. The analysis of the sensitivity of the coprime factorization for time-delay systems with respect to its time-delay and its applications to the computation of $\Hi$ performance are our future research directions.

\bibliographystyle{wileyj}
\bibliography{references}

\section{Appendix}
\subsection{Coprime-inner/outer factorizations of plants in Example~\ref{ex:fact}} \label{sec:app_fact}
\subsubsection*{Factorized plant $P_1$ by $C_1$ (\ref{Ccases})}:  The roots of $q_n(s)$ and $q_d(s)$ of $P_1(s)$ inside $\Cp$ are $1.0209 \pm1.4536j$ and $0.6235\pm0.8514j$ respectively. We obtain the inner functions $m_{q_n}(s)$ and $m_{q_d}(s)$ and factorize the plant $P_1(s)$ as in $C_1$ of (\ref{Ccases}): \\

\noindent $m_n(s)=\frac{s^2-2.0418s+3.1553}{s^2+2.0418s+3.1553}, \quad m_d(s)=\frac{s^2-1.2470s+1.1137}{s^2+1.2470s+1.1137}$,\\

\hspace{-4mm} $N_o(s)=\left(\frac{s^2-2s+3+0.2se^{-s}}{s^2-2.0418s+3.1553}\right)\left(\frac{s^2-1.2470s+1.1137}{s^3+1+e^{-1.5s}}\right)\left(\frac{s^2+2.0418s+3.1553}{s^2+1.2470s+1.1137}\right).$
\subsubsection*{Factorized plant $P_2$ by $C_1$ (\ref{Ccases})}: The quasi-polynomials $q_n(s)$ and $q_d(s)$ of $P_2(s)$ have $1$ and $2$ roots inside $\Cp$ at  $1.1296$ and $0.4153\pm1.6032j$ respectively. We compute the inner functions $m_{q_n}(s)$ and $m_{q_d}(s)$ from their roots inside $\Cp$ and factorize the plant $P_2(s)$ as in $C_1$ of (\ref{Ccases}): \\

\noindent $m_n(s)=\left(\frac{s-1.1296}{s+1.1296}\right)e^{-0.2s}, \quad m_d(s)=\frac{s^2-0.8306s+2.7426}{s^2+0.8306s+2.7426},$ \\

\hspace{-4mm} $N_o(s)=\left(\frac{(s-1)+(0.1s+1)e^{-0.1s}+(0.2s-3)e^{-0.8s}}{s-1.1296}\right)
\left(\frac{s^2-0.8306s+2.7426}{3s+0.5+(2s+7)e^{-1.5s}+(s-1)e^{-2s}}\right)$ \\
${}\hspace{11.3cm}\left(\frac{s+1.1296}{s^2+0.8306s+2.7426}\right).$
\subsubsection*{Factorized plant $P_3$ by $C_2$ (\ref{Ccases})}: The  roots of $\bar{q}_n(s)$ and $q_d(s)$ of the plant $P_3(s)$ inside $\Cp$ are $0.2470$ and  $0.4672\pm1.8891j$ respectively. We get the inner functions $m_{\bar{q}_n}(s)$ and $m_{q_d}(s)$ and factorize the plant $P_3(s)$ as in $C_2$ of (\ref{Ccases}): \\

\hspace{-4.5mm} $m_n(s)=\left(\frac{s-0.2470}{s+0.2470}\right)\left(\frac{s+3+(2s-2)e^{-0.4s}}{2s+2+(s-3)e^{-0.4s}}\right), \quad m_d(s)=\frac{s^2-0.9343s+3.7868}{s^2+0.9343s+3.7868},$  \\

\hspace{-4mm} $N_o(s)=\left(\frac{2s+2+(s-3)e^{-0.4s}}{s-0.2470}\right) \left(\frac{s^2-0.9343s+3.7868}{s^2+se^{-0.2s}+5e^{-0.5s}}\right) \left(\frac{s+0.2470}{s^2+0.9343s+3.7868}\right).$

\subsection{Optimal $\Hi$ controllers computed in numerical examples} \label{sec:app_Copts}

The optimal $\Hi$ controller terms for the plants $P_1(s)$, $P_2(s)$ and $P_3(s)$ are given below. Unstable pole-zero cancelations are transformed into the FIR terms, $F_n(s)$ and $F_d(s)$.

\subsubsection*{Optimal $\Hi$ controller terms for the plant $P_1$}: \\

\noindent $H_n=\frac{{\scriptsize \begin{array}{r}
(204.8139s^4+990.1165s^3+1729.2310s^2+1328.9154s+384.9869)+
(0.5589s^4+9.1606s^3 \\
+26.3922s^2+30.6421s+20.0340)e^{-1.5s}
\end{array}}}
{s^5+15s^4+59s^3+97s^2+72s+20},$ \\

\noindent $F_n=\frac{(-0.1260s+0.3061)-(0.5588s+0.0810)e^{-1.5s}}{s^2-1.2470s+1.1137},$ \\

\noindent $H_d=\frac{{\scriptsize \begin{array}{r}
(17.7128s^5+23.9061s^4+7.7671s^3-123.9613s^2-150.7115s-120.3186)
-(0.1427s^5 \\
-1.4834s^4-4.8317s^3-10.2271s^2-7.7671s-3.9171)e^{-s}
\end{array}}}
{s^6+15.0418s^5+62.6992s^4+139.4s^3+177.423s^2+118.234s+31.5533},$

\medskip

\noindent $F_d=\frac{(-0.1260s+0.3061)-(0.5588s+0.0810)e^{-1.5s}}{s^2-1.2470s+1.1137}.$

\subsubsection*{Optimal $\Hi$ controller terms for the plant $P_2$}: \\

\noindent $H_n=\frac{{\scriptsize \begin{array}{r}
2.5576s^3-23.7178s^2-66.2867s-27.6938-(0.7643s^3+26.4543s^2+104.3494s \\
+103.2943)e^{-1.5s}+(1.2847s^3-6.0435s^2-11.5679s+5.6143)e^{-2s}
\end{array}}}
{s^4+6.7782s^3+3.7362s^2-21.2389s-19.1968},$ \\

\noindent $F_n=\frac{(2.2581s-2.8559)+(3.9747s+0.6515)e^{-1.5s}+(0.3206s-1.3992)e^{-2s}}
{s^2-0.8306s+2.7426},$

\medskip

\noindent $H_d=\frac{
  {\scriptsize \begin{array}{l}
    (0.0333s^5+0.2933s^4+0.6379s^3-0.1744s^2-2.3605s-1.9407)e^{-0.2s}
    -(0.0150s^5 \\
    +0.1320s^4+0.2871s^3-0.0785s^2-1.0622s -0.8733)e^{-0.3s}
    -(5.6529s^5+48.4261s^4 \\
    +185.6168s^3+379.5343s^2+412.3743s+180.6268)e^{-0.4s}
    -(6.8948s^5+79.4156s^4+323.2374s^3 \\
    +634.8263s^2+574.2295s+187.8953)e^{-0.5s}
    +(0.0533s^5+0.4693s^4+1.0207s^3-0.2791s^2 \\
    -3.7768s-3.1051)e^{-s}+(14.9809s^5+180.1371s^4+738.4147s^3+1443.4060s^2+1274.2323s\\
    \hspace{10.2cm}+396.1757)e^{-1.2s}
  \end{array}}
}{s^6+9.7994s^5+27.9381s^4+13.9061s^3-76.0517s^2-129.0422s-58.2230},$

\medskip

\noindent $F_d=\frac{
  {\scriptsize \begin{array}{l}
    (-1.9920s^4-2.0032s^3-0.6709s^2-3.4737s+8.3747)e^{-0.2s}
    -(0.5516s^4+2.0532s^3+4.3838s^2 \\
    +6.3426s+5.1627)e^{-0.3s}+(4.0299s^4
    +1.1908s^3+0.1515s^2+8.6321s-18.2225)e^{-0.4s} \\
    +(6.7325s^4-1.5256s^3+4.3559s^2+22.2652s-39.4726)e^{-0.5s}
    +(0.4985s^4+4.3158s^3 \\ +10.8537s^2+14.5658s+16.9481)e^{-s}-(15.3054s^4-4.4247s^3+11.0188s^2+52.7515s \\
    \hspace{10cm} -92.1927)e^{-1.2s}
  \end{array}}
}{s^5-1.9602s^4+0.7387s^3+2.6692s^2-10.8299s+9.1153}.$

\subsubsection*{Optimal $\Hi$ controller terms for the plant $P_3$}: \\

\noindent $H_n=\frac{
  {\scriptsize \begin{array}{l}
    10^{-2}(1.4687s^3+0.7746s^2+0.1283s+6.9708\ 10^{-3})
    -10^{-3}(5.3127s^3-2.8339s^2+9.2829\ 10^{-2}s \\
    \hspace{0.8cm} -1.57435\ 10^{-2})e^{-0.2s}
    -10^{-3}(5.9993s^3
    +39.9491s^2-5.3170\ 10^{-2}s+2.0087\ 10^{-2})e^{-0.5s}
  \end{array}}
}{s^4+2.2312s^3+1.4755s^2+0.2575s+0.01312},$ \\

\noindent $F_n=\frac{(-0.009883s+0.02164)-(0.005714s+0.004544)e^{-0.2s}
+(0.005999s-0.03418)e^{-0.5s}}
{s^2-0.9343s+3.7868},$ \\

\noindent $H_d=\frac{
  {\scriptsize \begin{array}{l}
   (-0.5004s^7+6.2571s^6+5.9356s^5-0.1183s^4-14.4458s^3-4.0624s^2-0.7188s-0.099362) \\
   -(0.1610s^7+16.5703s^6+30.9780s^5
   +11.6068s^4-4.4790s^3-1.8178s^2-0.3662s \\
   \hspace{10cm} -0.006552)e^{-0.4s}
  \end{array}}
}{10^{3}(s^8+4.4514s^7+7.9042s^6+7.1200s^5+3.4667s^4+0.9528s^3+0.1740s^2+0.02196s+0.001250)},$ \\

\noindent $F_d=\frac{(-0.8777s^4+0.1428s^3-3.1172s^2-0.1909s-0.1546)+
(1.2623s^4-0.5283s^3+3.6099s^2+0.2144s+0.1640)e^{-0.4s}}
{10^2(s^5-1.1813s^4+4.0410s^3-0.9630s^2+0.0941s-0.02191)}.$ \\
\end{document}